\pdfoutput=1

\newif\iffull
\fulltrue

\iffull
  \documentclass[11pt]{article}
  \usepackage{fullpage}
\else
  \documentclass[conference,compsoc]{IEEEtran}
\fi

\iffull
  \newif\ifCLASSOPTIONcompsoc
\fi

\iffull
  \usepackage{cite}
\else
  \ifCLASSOPTIONcompsoc
    \usepackage[nocompress]{cite}
  \else
    \usepackage{cite}
  \fi

\fi

\usepackage[T1]{fontenc}
\usepackage[utf8]{inputenc}
\usepackage{microtype}

\usepackage{amsmath, amssymb, amsthm}
\usepackage{mathtools}
\usepackage{graphicx}
\usepackage{pgfplots}
\pgfplotsset{compat=1.18}
\usepackage{booktabs}
\usepackage{multirow}
\usepackage{array}

\usepackage{algorithm}
\usepackage{algpseudocode}

\usepackage{placeins}

\usepackage{xcolor}
\usepackage{xspace}
\usepackage[
  colorlinks=true,
  linkcolor={blue!55!black},
  citecolor={blue!55!black},
  urlcolor={blue!55!black},
  breaklinks=true,
]{hyperref}
\usepackage[capitalize]{cleveref}
\usepackage{lastpage}

\iffull\else
\makeatletter
\renewcommand\paragraph{%
  \@startsection{paragraph}{4}{\z@}%
    {1.5ex \@plus 0.5ex \@minus 0.2ex}
    {-0.5em}
    {\normalfont\normalsize\bfseries\def\@IEEEsectpunct{\ }}}
\makeatother
\fi

\usepackage{aliascnt}
\newcommand{\newaliasedtheorem}[3][theorem]{%
  \newaliascnt{#2}{#1}%
  \newtheorem{#2}[#2]{#3}%
  \aliascntresetthe{#2}%
  \crefname{#2}{#3}{#3s}%
  \Crefname{#2}{#3}{#3s}%
}

\newtheorem{theorem}{Theorem}
\crefname{theorem}{Theorem}{Theorems}
\Crefname{theorem}{Theorem}{Theorems}
\newaliasedtheorem{lemma}{Lemma}
\newaliasedtheorem{proposition}{Proposition}
\newaliasedtheorem{corollary}{Corollary}
\newaliasedtheorem{claim}{Claim}
\newaliasedtheorem{fact}{Fact}
\theoremstyle{definition}
\newaliasedtheorem{definition}{Definition}
\newaliasedtheorem{construction}{Construction}
\theoremstyle{remark}
\newaliasedtheorem{remark}{Remark}

\newif\ifnotes
\notestrue
\ifnotes
  \newcommand{\rnote}[1]{{\color{red} {Ron: #1}}}
  \newcommand{\cnote}[1]{{\color{blue} {Claude: #1}}}
  \newcommand{\bnote}[1]{{\color{green} {Benedikt: #1}}}
  \newcommand{\william}[1]{{\color{purple} {William: #1}}}
\else
  \newcommand{\rnote}[1]{}
  \newcommand{\cnote}[1]{}
  \newcommand{\bnote}[1]{}
  \newcommand{\william}[1]{}
\fi

\newcommand{\sysname}{Flock\xspace}
\newcommand{\N}{\mathbb{N}}
\newcommand{\F}{\mathbb{F}}
\newcommand{\Ftwo}{\F_2}
\newcommand{\Feight}{\F_{2^8}}
\newcommand{\Fbig}{\F_{2^{128}}}
\def\GF(#1){\F_{#1}}
\newcommand{\eq}{\mathrm{eq}}
\newcommand{\bitset}{\{0,1\}}

\makeatletter
\newcommand{\benchdef}[2]{\expandafter\def\csname benchval@#1\endcsname{#2}}
\newcommand{\bench}[1]{%
  \@ifundefined{benchval@#1}%
    {\PackageWarning{flock-bench}{unknown benchmark key `#1'}\textbf{??#1??}}%
    {\csname benchval@#1\endcsname}}
\makeatother

\benchdef{meta.hardware}            {Apple M4 Max}
\benchdef{meta.ram}                 {36}
\benchdef{meta.cores}               {10}
\benchdef{meta.threads}             {10}
\benchdef{meta.security}            {100}
\benchdef{kc.native}                {7.51}
\benchdef{kc.flock.st}              {30.7}
\benchdef{kc.flock.st.exp}          {14}
\benchdef{kc.flock.mt}              {252}
\benchdef{kc.flock.mt.exp}          {16}
\benchdef{kc.hc.st}                 {17.2}
\benchdef{kc.hc.st.exp}             {12}
\benchdef{kc.hc.mt}                 {66.2}
\benchdef{kc.hc.mt.exp}             {18}
\benchdef{kc.b64.st}                {5.0}
\benchdef{kc.b64.st.exp}            {12}
\benchdef{kc.b64.mt}                {37.6}
\benchdef{kc.b64.mt.exp}            {14}
\benchdef{kc.p3.st}                 {0.8}
\benchdef{kc.p3.st.exp}             {10}
\benchdef{kc.p3.mt}                 {6.3}
\benchdef{kc.p3.mt.exp}             {12}
\benchdef{overhead.kc}              {245}
\benchdef{kc.flock.st.head}         {30}
\benchdef{kc.flock.mt.head}         {250}
\benchdef{sha.native}               {7.14}
\benchdef{sha.flock.st}             {42.1}
\benchdef{sha.flock.st.exp}         {14}
\benchdef{sha.flock.mt}             {338}
\benchdef{sha.flock.mt.exp}         {16}
\benchdef{sha.b64.st}               {5.0}
\benchdef{sha.b64.st.exp}           {12}
\benchdef{sha.b64.mt}               {36.0}
\benchdef{sha.b64.mt.exp}           {13}
\benchdef{sha.spartan.st}           {0.1}
\benchdef{sha.spartan.st.exp}       {5}
\benchdef{sha.spartan.mt}           {0.1}
\benchdef{sha.spartan.mt.exp}       {5}
\benchdef{sha.nn.st}                {0.1}
\benchdef{sha.nn.st.exp}            {9}
\benchdef{sha.nn.mt}                {0.6}
\benchdef{sha.nn.mt.exp}            {9}
\benchdef{overhead.sha}             {170}
\benchdef{sha.flock.st.head}        {42}
\benchdef{sha.flock.mt.head}        {330}
\benchdef{bl.native}                {18.4}
\benchdef{bl.flock.st}              {82.1}
\benchdef{bl.flock.st.exp}          {16}
\benchdef{bl.flock.mt}              {661}
\benchdef{bl.flock.mt.exp}          {18}
\benchdef{bl.b64.st}                {6.1}
\benchdef{bl.b64.st.exp}            {12}
\benchdef{bl.b64.mt}                {47.1}
\benchdef{bl.b64.mt.exp}            {14}
\benchdef{bl.p3.st}                 {5.9}
\benchdef{bl.p3.st.exp}             {12}
\benchdef{bl.p3.mt}                 {45.9}
\benchdef{bl.p3.mt.exp}             {14}
\benchdef{overhead.bl}              {224}
\benchdef{bl.flock.st.head}         {82}
\benchdef{bl.flock.mt.head}         {660}
\benchdef{sp.kc.mt.hc}              {4}
\benchdef{sp.kc.mt.b64}             {7}
\benchdef{sp.kc.mt.p3}              {40}
\benchdef{sp.kc.st.hc}              {1.8}
\benchdef{sp.sha.mt.b64}            {9.4}
\benchdef{sp.sha.mt.b64.head}       {9}
\benchdef{sp.sha.st.spartan}        {666}
\benchdef{sp.sha.mt.spartan}        {2397}
\benchdef{sp.sha.st.spartan.h}      {600}
\benchdef{sp.sha.mt.spartan.h}      {2300}
\benchdef{sp.sha.st.nn}             {452}
\benchdef{sp.sha.mt.nn}             {595}
\benchdef{sp.sha.st.nn.h}           {400}
\benchdef{sp.sha.mt.nn.h}           {500}
\benchdef{sp.bl.mt.b64}             {14.0}
\benchdef{sp.bl.mt.p3}              {14.4}
\benchdef{sp.bl.mt.head}            {14}
\benchdef{stwo.bl2s.st}             {2.9}
\benchdef{stwo.bl3.st}              {4.1}
\benchdef{sp.bl.st.stwo}            {20}
\benchdef{range.p3.lo}              {10}
\benchdef{range.p3.hi}              {40}
\benchdef{kcfix.exp}                {14}
\benchdef{kcfix.flock.fast.thru}    {239}
\benchdef{kcfix.flock.fast.proof}   {386}
\benchdef{kcfix.flock.fast.verify}  {5.8}
\benchdef{kcfix.flock.slim.thru}    {189}
\benchdef{kcfix.flock.slim.proof}   {200}
\benchdef{kcfix.flock.slim.verify}  {5.3}
\benchdef{kcfix.flock.sec.thru}     {241}
\benchdef{kcfix.flock.sec.proof}    {558}
\benchdef{kcfix.flock.sec.verify}   {6.1}
\benchdef{kcfix.hc.thru}            {42.1}
\benchdef{kcfix.hc.proof}           {664}
\benchdef{kcfix.hc.verify}          {28}
\benchdef{kcfix.b64.thru}           {38.0}
\benchdef{kcfix.b64.proof}          {457}
\benchdef{kcfix.b64.verify}         {667}
\benchdef{kcfix.p3.thru}            {5.8}
\benchdef{kcfix.p3.proof}           {3.35}
\benchdef{kcfix.p3.verify}          {20.5}
\benchdef{kcfix.flock.verify.approx}{6}
\benchdef{phase.kc.witness_gen}     {8.0}
\benchdef{phase.kc.pcs_commit}      {26.4}
\benchdef{phase.kc.zerocheck}       {42.3}
\benchdef{phase.kc.lincheck}        {6.4}
\benchdef{phase.kc.pcs_open}        {16.8}
\benchdef{phase.sha.witness_gen}    {10.1}
\benchdef{phase.sha.pcs_commit}     {26.1}
\benchdef{phase.sha.zerocheck}      {38.6}
\benchdef{phase.sha.lincheck}       {5.9}
\benchdef{phase.sha.pcs_open}       {19.3}
\benchdef{phase.bl.witness_gen}     {5.0}
\benchdef{phase.bl.pcs_commit}      {24.4}
\benchdef{phase.bl.zerocheck}       {37.3}
\benchdef{phase.bl.lincheck}        {9.1}
\benchdef{phase.bl.pcs_open}        {24.3}
\benchdef{phase.pcszero.lo}         {84}
\benchdef{phase.pcszero.hi}         {86}
\benchdef{scaling.ymax.st}          {375}
\benchdef{scaling.ymax.mt}          {750}
\benchdef{scaling.speedup.max}      {8}
\benchdef{const.hashes_per_sig}     {160}
\benchdef{const.vitalik}            {200}
\benchdef{const.btc.sigs}           {5{,}000}
\benchdef{app.eth.tps}              {4{,}000}
\benchdef{app.btc.seconds}          {2.5}
\benchdef{head.overhead}            {250}
\benchdef{head.proof.kb}            {450}
\benchdef{head.proof.gates}         {30}

\newcommand{\blakeScalingST}{(10,46.908) (12,70.89) (14,79.468) (16,82.068) (18,80.66)}
\newcommand{\blakeScalingMT}{(10,185.172) (12,339.635) (14,528.005) (16,638.068) (18,660.512)}
\newcommand{\blakeScalingLabels}{%
  \node[above=3pt, font=\small] at (axis cs:10,185.172) {$3.9\times$};
  \node[above=3pt, font=\small] at (axis cs:12,339.635) {$4.8\times$};
  \node[above=3pt, font=\small] at (axis cs:14,528.005) {$6.6\times$};
  \node[above=3pt, font=\small] at (axis cs:16,638.068) {$7.8\times$};
  \node[above=3pt, font=\small] at (axis cs:18,660.512) {$8.2\times$};
}

\newif\ifsubmission
\submissionfalse   

\title{\texorpdfstring{\sysname}{Flock}: Fast Proving for Batch Boolean Computations}

\iffull
  \ifsubmission
  \else
    \author{
      Benedikt B\"{u}nz\thanks{NYU, Espresso Systems. Email: \url{bb@nyu.edu}} \and
      Ron D. Rothblum\thanks{Succinct. Email: \url{rothblum@gmail.com}} \and
      William Wang\thanks{NYU. Email: \url{ww@priv.pub}}
    }
  \fi
\else
  \ifsubmission
    \author{\IEEEauthorblockN{Paper \#1367, \pageref{LastPage} pages}}
  \else
    \author{%
      \IEEEauthorblockN{First Author}
      \IEEEauthorblockA{Affiliation\\Email: first@example.com}
      \and
      \IEEEauthorblockN{Second Author}
      \IEEEauthorblockA{Affiliation\\Email: second@example.com}
    }
  \fi
\fi

\iffull
\makeatletter
\let\dtrt@startsection\@startsection
\def\@startsection#1#2#3#4#5#6{
  \def\dtrt@secttype{#1}\dtrt@startsection{#1}{#2}{#3}{#4}{#5}{#6}}
\newif\ifdtrt@inTOC
\let\dtrt@toc\tableofcontents
\renewcommand\tableofcontents{\begingroup\dtrt@inTOCtrue\dtrt@toc\endgroup}
\let\dtrt@ssect\@ssect
\renewcommand\@ssect[5]{
  \dtrt@ssect{#1}{#2}{#3}{#4}{%
    \phantomsection
    \ifdtrt@inTOC\else\addcontentsline{toc}{\dtrt@secttype}{#5}\fi
    #5}}
\makeatother
\fi

\begin{document}
\maketitle

  \pagestyle{plain}
  \thispagestyle{plain}

\iffull\else
  \IEEEpeerreviewmaketitle
\fi

\begin{abstract}
For many applications of SNARKs, a key bottleneck is proving large batches of standard cryptographic hash evaluations, such as SHA-256, Keccak, or BLAKE3. We introduce \sysname{}, a hash-based SNARK for extremely fast proving of such batched Boolean computations. \sysname{} proves batches of the same R1CS circuit (plus input/output relations between them), can prove hash-chains and Merkle path openings, and in principle can be extended to full-fledged hash-based signature verification. At its core, \sysname{} combines new optimizations for the lincheck and zerocheck protocols with an aggressively optimized proof-of-concept implementation co-designed by coding agents.

On a single core of an M4 Max processor, \sysname{} proves \bench{bl.flock.st.head}k evaluations of the BLAKE3 compression function, \bench{sha.flock.st.head}k SHA-256 compressions, and \bench{kc.flock.st.head}k Keccak permutations per second --- less than a $\bench{head.overhead}\times$ overhead over native execution. On ten cores, throughput exceeds \bench{bl.flock.mt.head}k BLAKE3 compressions per second; in proving SHA-256, \sysname{} is more than $\bench{sp.sha.mt.b64.head}\times$ faster than Binius64, the prior state of the art, and more than $\bench{sp.sha.mt.nn.h}\times$ faster than the fastest elliptic curve-based SNARK we measured against.

\end{abstract}

\iffull
  \newpage
  \setcounter{tocdepth}{2}
  \tableofcontents
  \newpage
\fi

\section{Introduction}
\label{sec:intro}

Succinct non-interactive arguments of knowledge (SNARKs) let an untrusted prover convince a verifier that a computation was executed correctly, by sending a short certificate that the verifier checks in time that is \emph{much} faster than actually performing the computation.

The key bottleneck in SNARKs is the complexity of \emph{proving} correctness of the computation. The prover's work scales with the complexity of the computation being proved, and the overhead can be quite large. State-of-the-art provers for general-purpose computations (aka zkVMs) are four to six orders of magnitude slower than native execution~\cite{Thalersblog}. While there are other important resources such as proof-size and verification time, these can usually be solved using recursive proof composition.\footnote{Recursive proof composition\cite{Valiant08,CT10}, combines a fast but \emph{largish} SNARK system with one that has short proofs but a more expensive prover. This composition delivers the best-of-both-worlds (i.e., fast provers \emph{and} short proofs) and is common in practice, see, e.g., \cite{Nova,Halo2,Plonky2}.} Thus, the prover time is the key remaining limiting factor for most applications.

This motivates the central question of this paper:
\begin{center}\itshape
  What is the smallest concrete overhead that proving can have over native execution?
\end{center}

Continuing a line of theoretical \cite{RR24,RR25,HR22,ARR25} and more recently also applied \cite{binius,biniusml,hashcaster,Irr} work, we focus on proving correctness of computations expressed as \emph{Boolean circuits} (i.e., with bit-wise AND and XOR gates). This is in contrast to the vast majority of SNARK research, which focuses on proving correctness of arithmetic computation over large finite fields.

Boolean circuits are a natural target for two reasons. First, they serve as a good proxy for the actual cost of general computation. Second, they are perfectly suited to expressing the standard cryptographic primitives that dominate key applications of SNARKs.
More generally, the same thing that makes a computation efficient natively — fewer bit operations — is what makes it cheap to prove in our system. Application designers optimize for efficiency anyway, and that optimization carries straight through to the prover.

\paragraph{\sysname: Fast proving for batch computations.} We introduce \sysname, a proof-system designed and implemented to prove \emph{batch} computations: that is, workloads consisting of many independent invocations of the same Boolean circuit, tied together by (simple) relations on their inputs and outputs. Concretely, fix a Boolean circuit $F$ and an arbitrary auxiliary input/output (IO) circuit $G$; \sysname{} proves statements of the form
\begin{gather*}
  F(\mathbf{x}_i, \mathbf{y}_i) = 0 \quad \forall\, i \in [K], \text{ and}\\
  G\bigl((\mathbf{x}_i)_{i \in [K]}, (\mathbf{y}_i)_{i \in [K]}\bigr) = 0.
\end{gather*}
The idea is for the repeated relation $F$ to capture the bulk of the work; the ``glue circuit $G$'' is in principle unrestricted but should be simple. For example, the $F$ circuits can be hash evaluations and $G$ checks that the inputs and outputs correspond to a hash-chain or a set of Merkle paths. 

The batch setting is motivated both theoretically and practically. Thaler~\cite{Thaler13} demonstrated the efficiency of batch proving using a variant of the GKR~\cite{GKR15} protocol. Proving a batch is significantly simpler than proving arbitrary circuits, as it obviates the need for expensive permutation or memory checking arguments~\cite{Setty20,HyperPlonk23,Marlin} and preprocessing. The verifier can be linear in the evaluation of one computation but sublinear in the entire batch size. For this reason, batch computation has been well studied in theory in a line of work which showed that for sufficiently uniform computations one can in principle prove ``as fast as one computes'' \cite{RR25,HR22}. 

Batch computation is also of immense practical interest: Almost every deployed SNARK workload is dominated by batched copies of a single small computation: hash chains in verifiable delay functions, Merkle openings, aggregation of hash-based signatures, and the inner layer of recursive proving are all collections of many invocations of the same primitive (with some small logic on top). Even general machine computation contains repeated steps of the same primitive (a single instruction execution). Most recently, the Ethereum Foundation specified Lean VM\cite{leanvm}, with the specific goal of supporting their post-quantum transition through SNARKs. Lean VM is a read-only memory virtual machine with native support for hashing, and workloads such as signature aggregation are dominated by the hashing instructions.

In our implementation, we focus on the batch computation of functions $F$ corresponding to hash function evaluations and a glue circuit $G$ that supports basic linear IO relations. This suffices for hash-chains, Merkle trees and independent hashing. These circuits are sufficiently rich to capture many applications, e.g. VDFs, and demonstrate the power of our proof system. Our protocol can be easily extended by implementing more complex $G$, or can be tied together with a compatible multilinear proof system over binary fields 
(e.g., simple adaptations of PIOPs such as \cite{Setty20,HyperPlonk23})

\paragraph{Why hashes, and why standard hashes.} We benchmark \sysname{} on the most ubiquitous batch workload, cryptographic hashing, and we deliberately target \emph{standard}, hardware-friendly hash functions---Keccak, SHA-256, and BLAKE3---rather than the ``SNARK-friendly'' primitives (e.g.\ Poseidon\cite{Poseidon,GKS23,GKKRSSW26}, Rescue \cite{Rescue}, MiMC \cite{MiMc}) that much of the literature co-designs with its proof systems. Standard hashes are what most real-world applications already use, from Merkle trees and verifiable delay functions to the hash-based signatures proposed for post-quantum Bitcoin and Ethereum, and they are the most heavily scrutinized primitives in deployment; SNARK-friendly designs are far less battle-tested, and their algebraic structure has repeatedly enabled attacks~\cite{AshurHADES23,SteinerPoseidonGB24,GrassiNeptune25,CheapLunch25,ZhaoGraeffeNTT25,MerzPoseidon2_26}. More fundamentally, a SNARK-friendly hash is cheap to prove only inside a proof system built over the single field it was designed for: it \emph{enshrines} that field, so only SNARKs over the same field can exploit it. This severely limits composability and upgradability --- swapping the proof system, or composing two systems over different fields, forfeits the speedup --- and it forces the application designer into a false choice: optimize the hash for the prover and pay for it in native execution, or optimize for native speed and pay for it in the prover.

Targeting a standard hash removes this choice. The hash is fixed by the application and runs identically whether or not a proof is ever produced, so the only remaining question is how cheaply it can be proved. By showing that \sysname's prover throughput essentially tracks the native speed of standard hashes --- while remaining best-in-class on what are arguably the three most popular hash functions --- we let designers pick a hash on its own merits and obtain efficient proving regardless. This also makes ``prover time vs.\ native time'' a fair comparison: standard hashes cost only a handful of cycles per byte on modern hardware, so any inefficiency in the proof system is immediately visible against a single-digit-nanoseconds-per-byte baseline. SNARK-friendly hashes, by contrast, flatter the prover by making the native baseline itself slow.

\subsection{Our results}
We design and implement \sysname, a hash-based post-quantum SNARK for batch Boolean R1CS computations. Our headline number, measured on a single core of an Apple M4 Max:
\[
  \frac{\text{prover time per hash}}{\text{native time per hash}} \;\approx\; \bench{head.overhead},
\]
where the native time measures independent \emph{computations} of the underlying permutation/compression function.\footnote{ARM has special instructions for acceleration of SHA-256 and SHA3 compressions. Since we are interested in an apples-to-apples comparison, we do not enable these instructions when computing the native execution time.}

On the same computer \sysname proves $\bench{bl.flock.st.head}$k BLAKE3 compressions, $\bench{sha.flock.st.head}$k SHA-256 compressions, and $\bench{kc.flock.st.head}$k Keccak permutations per second on a single core (this includes witness generation which benchmarks sometimes exclude). The implementation parallelizes well: on ten cores throughput exceeds $\bench{bl.flock.mt.head}$k BLAKE3 compressions per second.

On proving SHA-256, \sysname is roughly $\bench{sp.sha.mt.b64.head}\times$ faster than Binius64~\cite{Irr}, the prior state-of-the-art, and over $\bench{sp.sha.mt.nn.h}\times$ faster than NeutronNova~\cite{NeutronNova,Spartan2}. It is about $\bench{sp.kc.st.hc}\times$ faster than Hashcaster \cite{hashcaster} in proving the Keccak-f permutation, and $\bench{sp.bl.mt.head} \times$ faster than Binius64 and Plonky3~\cite{Plonky3} in proving BLAKE3. See \cref{sec:eval} for the full breakdown. \sysname's proofs are less than $\bench{head.proof.kb}$ KB for computations with $2^{\bench{head.proof.gates}}$ AND gates and take only a few milliseconds to verify.

\paragraph{From hashes to signatures.} These throughputs translate into application-level rates. Post-quantum hash-based signatures\cite{Lamport,Winternitz,XMSS} are dominated by hashing: a single signature in the stateful multi-signature scheme proposed for Ethereum's quantum transition costs about $\bench{const.hashes_per_sig}$ hash invocations~\cite{DKKW25}, and aggregating many such signatures is exactly the batched-hashing workload \sysname targets.
Were Ethereum to instantiate these signatures with BLAKE3, \sysname's ten-core throughput of $\bench{bl.flock.mt.head}$k compressions per second could prove hashing corresponding to $\bench{app.eth.tps}$ transactions per second.
Vitalik Buterin recently estimated~\cite{vitalik2024} that proving $\bench{const.vitalik}$k hashes per second would suffice for a post-quantum transition.
\sysname, therefore, puts BLAKE3 and other standard hash functions as feasible targets for Ethereum's roadmap.
For Bitcoin, which is extremly conservative, and avoids novel cryptographic assumptions, a post-quantum transition will likely involve SHA256-based signatures. \sysname can prove hashing corresponding to $\bench{const.btc.sigs}$ signatures, the current upper limit of a Bitcoin block, in less than $\bench{app.btc.seconds}$ seconds on consumer hardware. We stress that these figures count only the hashing inside signature verification --- not transaction execution, state access, or any other logic.

\paragraph{Security.} \sysname targets $\bench{meta.security}$ bits of soundness (for details see \cref{sec:eval}). We rely only on proximity-gap results in the proven regime \cite{BCHKS25}; the remaining soundness assumption
is the security of SHA-256, used internally for both Merkle commitments and Fiat-Shamir.

\paragraph{Limitations.}
\iffull \sysname is a research prototype and we do not recommend using it in production systems yet. \fi
\iffull It \else \sysname \fi supports only batched Boolean-circuit computations expressed as R1CS. We give specific instantiations (hash-chaining and Merkle trees) for the IO circuit $G$, but do not support general circuits yet (in particular, supporting signature aggregation would require additional work). The implementation was heavily AI-assisted and was targeted at demonstrating optimal performance on a specific system (Apple ARM processors), rather than generality. We expect that similar optimizations carry over to other architectures.

Lastly, we remark that the system currently only offers succinctness and not zero-knowledge but we expect that zero-knowledge could be added at a relatively low cost (e.g., via techniques from the recent works \cite{CFW26,DHRR26}).

\subsection{Overview of techniques}
The starting point for \sysname is a Spartan-like \cite{Setty20} (cf. \cite[Chapter 8]{Thaler22}) PIOP for batch-R1CS, over the binary field $\Ftwo$. The batching approach is similar to that of \cite{BCGGRS19} (in the univariate setting) and \cite{TKPS22,HR22} (in the multilinear one). Compared to Plonkish~\cite{Plonk} arithmetization, R1CS has a significantly smaller witness as it does not commit to the output of addition gates. For soundness, we rely on binary extension fields. The use of such extension fields in proof-systems originates in early works \cite{BS08,FRI} leading into implemented systems \cite{BBHR18,Irr,hashcaster}.

Our concrete gains come from multiple ingredients, including multiple known optimizations combined with several novel ones that we introduce. We define a simple yet powerful zerocheck formulation for batch-R1CS that can be combined with a single small lincheck across the $A,B,C$ matrices. The lincheck amortizes the verifier's matrix evaluation cost over the batch dimension, and is extremely cheap. It takes less than $10\%$ of our total prover time. The batch R1CS optimization is based on \cite{BCGGRS19,TKPS22,HR22} and is quite simple, but already buys us significant mileage --- in particular making the zerocheck protocol the key remaining bottleneck, and the focus of most of our optimization efforts.

Focusing on zerocheck (which, in our context, is simply checking that the pointwise product between bit-vectors $a$ and $b$ is equal to $c$) we first and foremost utilize many known optimizations from the literature. These include the univariate skip \cite{Gruen24}, partially deterministic sumcheck challenges \cite{DT24}, deferred modular reduction and evaluation at infinity \cite{DBEMPT26}.

We push these optimizations further. First, building on Dao and Thaler's~\cite{DT24} partially deterministic zerocheck challenges, we identify a new set of challenge points that can process $\log(|\F|)$ of the individual zerocheck bit claims using roughly $2$ bit-shifts and XORs per input, and without relying on tower-fields, which are less efficient than direct extensions of $\Ftwo$.

Second, similarly to Binius64~\cite{Irr}, we utilize a degree 64 univariate skip~\cite{Gruen24}. We introduce a new cache-friendly lookup-based low-degree-extension kernel that is the main workhorse of the zerocheck protocol. 
Additionally, we show how the zerocheck claim over the vector $c$ can be \emph{elided}, removing roughly a third of the invocations of our most expensive procedure, namely, the batch univariate low degree extension for the univariate skip. This step requires us to evaluate our witness at two independent points. We further optimize the batch evaluation and the ring-switching techniques for this setting. 
See \cref{sec:impl} for details.

Our PCS is based on the Binius~\cite{biniusml} ring-switching technique combined with Ligerito~\cite{NA25} as the underlying dense PCS (the system is modular and one could equally instantiate the dense PCS with a different Boolean commitment scheme such as Blaze~\cite{Blaze} or Bolt~\cite{GNR26} or use Blaze as a Boolean-PCS and avoid the ring-switching altogether). \iffull We provide a self-contained overview of the ring-switching technique in \cref{appx:ring_switch}, replacing one of the steps with a more modular, and arguably simpler, approach. We also give an overview of Ligerito in \cref{appx:ligerito}, and extend its analysis to the list-decoding regime (following \cite{ArnonCFY25}).

\fi


\subsection{Related works}

\paragraph{Binius.} \sysname is most closely related to the Binius64 proof-system \cite{Irr} (which builds on \cite{binius,biniusml}). Like Binius, we work over binary fields and we build our $\F_2$ polynomial commitment using their packing and ring-switching technique~\cite{biniusml}, which reduces building a commitment to a small-field ($\F_2$) witness to building one over the large field $\F_{2^{128}}$. Unlike the Binius academic papers, which use tower fields, but similarly to the Binius64 implementation, we instantiate $\Feight$ as the AES field
and $\Fbig$ as GHASH, as they have much faster implementations. Some high-level differences between our approach and that of Binius64 are that we commit only to the inputs, outputs and multiplication gates, whereas Binius64 commits to all wire values. This leads to our trace being significantly smaller. Also, by focusing on the batch setting our lincheck is drastically cheaper than the more general one that they support. Lastly, our new techniques significantly improve zero-check, a key bottleneck for both systems but especially for \sysname.

\paragraph{Hashcaster.} Hashcaster~\cite{hashcaster} is a GKR-based system using binary fields that specifically targets batch proving of Keccak permutations. Extending it to SHA-256 or BLAKE3 seems difficult as these hash functions perform u32-addition operations that are hard to handle via GKR.\footnote{While logarithmic-depth linear-size circuits for addition are known \cite{BK82}, using them would introduces an impractical depth blowup.}

For proving Keccak, Hashcaster's performance is closest to ours ($\sim \hspace{-1mm}\bench{sp.kc.st.hc}\times$ slower on a single core, $\bench{sp.kc.mt.hc}\times$ slower multi-threaded). Additionally, and in contrast to \sysname, its proof size and verifier time grow linearly in the depth of the computation. The main advantage of the GKR approach is that the commitment is to a \emph{much} smaller witness. Interestingly, some of our key optimizations do not immediately translate to the GKR setting. In particular, our deterministic challenges rely on performing a single zerocheck-to-sumcheck conversion: because we move from a small $\F_2$ witness to an $\F_{2^{128}}$ sumcheck in one shot, the challenges for that step can be made partially deterministic. GKR-style systems such as Hashcaster instead run many rounds of sumcheck, with the challenges in each round derived from the previous one, so this optimization does not seem to carry over.

\paragraph{Prime field SNARKs.} A large body of work builds SNARKs over larger prime fields (and extensions thereof). Plonky3~\cite{Plonky3}, for example, combines a Plonk-style, univariate PIOP with FRI as its polynomial commitment. Arithmetizing Boolean computation --- such as the bit-level operations underlying standard hash functions --- is costly over these fields because of their high \emph{embedding overhead}: the gap between the number of bits a field element occupies and the number of bits of useful information it carries. While several of our techniques transfer to other fields, we focus on binary fields precisely because of their minimal embedding overhead. Overall, \sysname is around \bench{range.p3.lo}-\bench{range.p3.hi}x faster than Plonky3\cite{Plonky3}, a popular proof library which operates over a $31$-bit prime field.

\paragraph{Group based SNARKs.} Finally, there is a large family of group-based SNARKs, including Groth16~\cite{Groth16}, Plonk/HyperPlonk~\cite{Plonk,HyperPlonk23}, Spartan-Hyrax~\cite{Setty20}, and NeutronNova~\cite{NeutronNova}. These systems are not post-quantum secure and require arithmetizing the computation over an exponentially large prime field. While their proofs can be very small, their prover performance is significantly worse on batch hashing compared to \sysname. We compare against the \emph{multi-circuit} scheme from the vega-prover library~\cite{Spartan2}, an implementation of SplitNeutronNova\cite{Vega,NeutronNova}. The implementation does not contain ARM-specific optimiations and are designed for client-side proving, in particular the mobile driver's license application~\cite{Vega}, which contains a SHA-256 hash\footnote{The library is optimized for low ZK proving latency on signed messages,  not for raw throughput.\url{https://github.com/microsoft/vega-prover/blob/main/README.md}}. Nevertheless, the scheme is the fastest group-based prover for SHA-256 we found (vs. vega-sc, gnark, noir, snark-js). It is $\bench{sp.sha.st.nn.h}/\bench{sp.sha.mt.nn.h}\times$ (single/multi-threaded) slower than \sysname.

\subsection{Organization}
\Cref{sec:prelim} contains preliminaries. \Cref{sec:construction} presents a baseline Spartan-like proof-system for batch-R1CS over $\Ftwo$ that serves as our starting point. \Cref{sec:impl} is the technical core of the paper, developing the optimizations that make \sysname{} fast. \Cref{sec:eval} reports our experimental evaluation and comparison against prior systems.
\iffull
\cref{app:full-protocol} gives the full protocol specification, \cref{appx:ring_switch} a self-contained overview of the ring-switching technique. \cref{appx:ligerito} gives an overview of the Ligerito polynomial commitment scheme. Lastly, \cref{app:cauchy-shift} proves the Cauchy-shift identity.
\fi

\section{Preliminaries}
\label{sec:prelim}

\paragraph{Notation.} Throughout, $\F$ denotes a finite field. For a positive integer $n$, we write $[n] := \{1, \ldots, n\}$ and identify the Boolean hypercube $\bitset^n$ with the set of binary vectors of length $n$. Vectors are written in bold (e.g.\ $\mathbf{x}$) and indexed as $\mathbf{x} = (x_1, \ldots, x_n)$. We use $\log$ for the base-$2$ logarithm.

\subsection{Multilinear extensions}
\label{sec:prelim:mle}

Every function $f : \bitset^m \to \F$ has a unique multilinear extension $\hat{f} : \F^m \to \F$ that agrees with $f$ on $\bitset^m$, given explicitly by
\[
  \hat{f}(\mathbf{x}) \;=\; \sum_{\mathbf{b} \in \bitset^m} f(\mathbf{b})\cdot\eq(\mathbf{b}, \mathbf{x}),
\]
where the equality polynomial
\[
  \eq(\mathbf{b}, \mathbf{x}) \;:=\; \prod_{i=1}^{m} \bigl( b_i\, x_i + (1 - b_i)(1 - x_i) \bigr)
\]
is the unique multilinear polynomial that takes value $1$ on $\mathbf{b}$ and $0$ elsewhere on $\bitset^m$. The following equivalent representation will also be useful:

\begin{fact}\label{fact:eq-product}
  For $\mathbf{b} \in \bitset^m$ and $\mathbf{x} \in \F^m$ it holds that:
  \[
  \eq(\mathbf{b}, \mathbf{x}) = \prod_{i=1}^{m} \left( x_i^{b_i} \cdot (1-x_i)^{1-b_i} \right).
  \]
\end{fact}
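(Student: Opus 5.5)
The identity is proved factor by factor: both sides are products over $i \in [m]$, so it suffices to show that for each fixed $i$ the $i$-th factors agree, i.e.
\[
  b_i\, x_i + (1-b_i)(1-x_i) \;=\; x_i^{b_i}\,(1-x_i)^{1-b_i}
\]
for every $b_i \in \bitset$ and $x_i \in \F$. Taking the product over $i$ then gives the claimed equality with the definition of $\eq$ from \cref{sec:prelim:mle}.

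The per-coordinate identity follows by a two-case check on the bit $b_i$. Throughout, the exponents $b_i$ and $1-b_i$ are read as the integers $0$ or $1$, not as field elements, and $y^0 := 1$ for every $y \in \F$, including $y = 0$.
\begin{itemize}
\item If $b_i = 1$, the left side is $1\cdot x_i + 0\cdot(1-x_i) = x_i$, and the right side is $x_i^1 (1-x_i)^0 = x_i$.
\item If $b_i = 0$, the left side is $0\cdot x_i + 1\cdot(1-x_i) = 1-x_i$, and the right side is $x_i^0 (1-x_i)^1 = 1-x_i$.
\end{itemize}

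The only point needing care is that exponent convention: in characteristic $2$, for instance, $1-x_i$ may vanish, so we must have $0^0 = 1$ for the two cases to match. With that convention fixed, there is no real obstacle, and the statement is immediate.
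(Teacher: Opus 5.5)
Your proof is correct and matches the paper, which states this fact without proof as immediate from the definition of $\eq$; the factor-by-factor comparison with a case split on $b_i$ is exactly the implicit argument. One small remark: the $y^0 := 1$ convention is needed in every characteristic, since the issue arises whenever $x_i \in \{0,1\}$, so the qualifier ``in characteristic $2$'' is unnecessary.
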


\subsection{Rank-1 constraint systems (R1CS)}
\label{sec:prelim:r1cs}

A \emph{rank-1 constraint system} (R1CS) over $\F$ is specified by three matrices $A, B, C \in \F^{N \times M}$. A vector $\mathbf{z} \in \F^M$ \emph{satisfies} the system if
\[
  (A\mathbf{z}) \circ (B\mathbf{z}) \;=\; C\mathbf{z},
\]
where $\circ$ denotes the entrywise (Hadamard) product. Splitting $\mathbf{z} = (\mathbf{x}, \mathbf{w})$ into a public statement $\mathbf{x} \in \F^{\ell}$ and a private witness $\mathbf{w} \in \F^{M-\ell}$ yields the standard R1CS satisfiability relation: $\bigl((A,B,C), \mathbf{x}\bigr)$ is in the relation iff there exists $\mathbf{w}$ such that $\mathbf{z} = (\mathbf{x}, \mathbf{w})$ satisfies the above.

Throughout this paper we will be focusing on square R1CS matrices with $N = M = 2^m$ over the binary field $\Ftwo$.

\subsection{Sumcheck}
\label{sec:prelim:sumcheck}

The sumcheck protocol~\cite{LFKN92} is an interactive reduction for claims of the form
\[
  S \;=\; \sum_{\mathbf{b} \in \bitset^m} g(\mathbf{b}),
\]
where $g : \F^m \to \F$ is a polynomial of individual degree at most $d$ in each variable. The protocol runs for $m$ rounds; in each round the prover sends a univariate polynomial of degree $\le d$ and the verifier replies with a uniform challenge $r_i \in \F$. At the end the verifier is left with a single evaluation claim $g(\mathbf{r}) = v$ at a point $\mathbf{r} = (r_1, \ldots, r_m) \in \F^m$. The protocol has perfect completeness and soundness error at most $m d / |\F|$.

\begin{remark}[Batch sumcheck]
\label{rem:batch-sumcheck}
Multiple sumcheck claims $\{v_t = \sum_{\mathbf{b} \in \bitset^m} g_t(\mathbf{b})\}_{t=1}^T$ on the same hypercube can be folded into one: the verifier samples random $\alpha_1, \ldots, \alpha_T \in \F$, and the parties run sumcheck on $\sum_t \alpha_t v_t = \sum_{\mathbf{b}} \sum_t \alpha_t g_t(\mathbf{b})$. The combined sumcheck inherits the degree and round count; the soundness error increases additively by $1/|\F|$ due to the randomness of the $\alpha$-combination.
\end{remark}

\subsubsection{Zerocheck}
\label{sec:prelim:sumcheck:zero}
To check that a polynomial $f : \F^m \to \F$ vanishes on the Boolean hypercube, the verifier samples $\mathbf{r} \in \F^m$ and runs sumcheck on
\[
  \sum_{\mathbf{b} \in \bitset^m} \eq(\mathbf{r}, \mathbf{b}) \cdot f(\mathbf{b}) \;\stackrel{?}{=}\; 0.
\]
The summed expression equals the multilinear extension (in $\mathbf{r}$) of $f|_{\bitset^m}$, so it vanishes identically iff $f$ vanishes on $\bitset^m$; a non-vanishing $f$ is caught except with probability $m/|\F|$ over the choice of $\mathbf{r}$, plus the soundness error of the underlying sumcheck. We refer to this combined protocol as the \emph{zerocheck} on $f$.

\subsection{Multilinear polynomial commitments}
\label{sec:prelim:pcs}

A multilinear \emph{polynomial commitment scheme} (PCS) over $\F$ lets a prover commit to a multilinear polynomial $\hat{f} : \F^m \to \F$, producing a short digest $\mathsf{cm}$.
Later, the prover can generate a short proof $\pi$ certifying that $\hat{f}(\mathbf{r}) = v$ for a point $\mathbf{r} \in \F^m$ and value $v \in \F$.

\paragraph{Interactive oracle PCS.}
We describe the information-theoretic analogue of a PCS, known as an \emph{interactive oracle PCS} (IOPCS)~\cite{biniusml,BFRW25}.
A multilinear IOPCS consists of two phases:
\begin{itemize}
  \item \emph{Commitment phase:} The prover, given a function $f \colon \bitset^m \to \F$, interacts over multiple rounds with the verifier; the resulting transcript functions as a commitment to $f$.
  \item \emph{Opening phase:} Given a point $\mathbf{r} \in \F^m$ and value $v \in \F$ claimed to be $\hat{f}(\mathbf{r})$, the prover interacts over multiple rounds with the verifier; the resulting transcript functions as a proof. At the end, the verifier either accepts or rejects.
\end{itemize}
Crucially, we require that the verifier only reads a few bits from the commitment and proof strings, which may be long (e.g., as long as the description of $f$).
We additionally require the following properties:
\begin{itemize}
  \item \emph{Completeness:} Assuming the prover and verifier behave honestly, the verifier always accepts whenever $\hat{f}(\mathbf{r}) = v$.
  \item \emph{Binding:} Every commitment transcript $\mathsf{cm}$ is associated with a subset $S_{\mathsf{cm}}$ of multilinear polynomials $\F^{m} \to \F$ such that the following holds. For any (possibly dishonest) prover, $S_{\mathsf{cm}}$ contains at most one element, with all but negligible probability (over the prover and verifier's randomness).
  \item \emph{Soundness:} For any (possibly dishonest) prover, if there does not exist a $\hat{f} \in S_{\mathsf{cm}}$ such that $\hat{f}_{\mathsf{cm}}(\mathbf{r}) = v$, then the verifier rejects with all but negligible probability.
\end{itemize}
We remark that the opening phase (and soundness property) can be generalized to support multiple evaluation claims.

Throughout this work we refer to IOPCSs simply as PCSs.

\paragraph{Boolean PCS.} In this work we rely on \emph{Boolean} multilinear PCSs, which support committing to multilinear extensions of Boolean-valued functions $f:\bitset^{m} \to \bitset$. The verifier is guaranteed that any commitment is consistent with a function of this form.

\emph{Ring-switching}~\cite{biniusml} is a technique that enables constructing Boolean multilinear PCSs from (standard) multilinear PCSs.
In more detail, suppose that $\F \supseteq \F_2$ is a large binary field with extension degree $2^{k}$.
Then, ring-switching transforms any ``dense'' $(m-k)$-variable multilinear PCS over $\F$ into a Boolean $m$-variable multilinear PCS over $\F$.
In our constructions, we instantiate the dense PCS with Ligerito~\cite{NA25}.

\section{Baseline R1CS Protocol}
\label{sec:construction}

We start by describing a baseline succinct argument for R1CS over the binary field $\Ftwo$, deferring all optimizations to \cref{sec:impl}. The protocol is a variant of Spartan~\cite{Setty20} for the binary field setting (see also~\cite[Chapter~8]{Thaler22}). The construction is described as an interactive oracle proof~\cite{BCS16,RRR21}, which can be compiled into a SNARK via standard transformations~\cite{Kilian92,Micali00,BCS16}.

\paragraph{Setup.} Let $m \in \N$ and fix an R1CS instance $(A, B, C)$ in which the matrices are square matrices over $\F_2$ of dimension $2^m \times 2^m$. We view them as functions $A,B,C : \bitset^m \times \bitset^{m} \to \F_2$. Consider a satisfying assignment $\mathbf{z}$ of the form $\mathbf{z} = (\mathbf{x}, \mathbf{w}) \in \F_{2}^{2^m}$, where $\mathbf{x}$ is the public input (known to the verifier) and $\mathbf{w}$ is the private input (i.e., the witness). Let $\mathbf{a} := A\mathbf{z}$, $\mathbf{b} := B\mathbf{z}$, $\mathbf{c} := C\mathbf{z}$. We view $\mathbf{a}, \mathbf{b}, \mathbf{c}, \mathbf{z}$ as functions (from $\bitset^m$ to $\Ftwo$) and write $\hat{a}, \hat{b}, \hat{c}, \hat{z} : \F^m \to \F$ for their multilinear extensions over a large extension field $\F \supseteq \Ftwo$. Similarly, for each matrix $M \in \{A, B, C\}$ we denote its multilinear extension by $\hat{M} : \F^m \times \F^m \to \F$. The R1CS satisfiability condition is then equivalent to
\[
  \hat{a}(\mathbf{i}) \cdot \hat{b}(\mathbf{i}) - \hat{c}(\mathbf{i}) \;=\; 0 \qquad \forall\, \mathbf{i} \in \bitset^m.
\]

\paragraph{Protocol.} The baseline protocol proceeds as follows:

\begin{enumerate}
\item \textbf{Commit to the trace.} The prover commits to $\hat{z}$ using a Boolean multilinear PCS (see \cref{sec:prelim:pcs}), producing a commitment transcript $\mathsf{cm}$.

\item \textbf{Zerocheck on $\hat{a} \cdot \hat{b} - \hat{c}$.} The verifier samples a random challenge $\mathbf{r} \in \F^m$, and the parties run sumcheck on
\[
  \sum_{\mathbf{i} \in \bitset^m} \eq(\mathbf{r}, \mathbf{i}) \cdot \bigl(\hat{a}(\mathbf{i}) \cdot \hat{b}(\mathbf{i}) - \hat{c}(\mathbf{i})\bigr) \;\stackrel{?}{=}\; 0.
\]
After $m$ rounds the verifier holds a random point $\mathbf{r}_y \in \F^m$ and claimed values
\[
  v_a = \hat{a}(\mathbf{r}_y), \qquad v_b = \hat{b}(\mathbf{r}_y), \qquad v_c = \hat{c}(\mathbf{r}_y),
\]
and checks locally that $\eq(\mathbf{r}, \mathbf{r}_y) \cdot (v_a \cdot v_b - v_c)$ matches the final sumcheck value.

\item \textbf{Lincheck: reduce the $\hat{a}, \hat{b}, \hat{c}$ claims to a single claim on $\hat{z}$.} We describe the reduction for $\hat{a}$ first. Since $\mathbf{a} = A\mathbf{z}$, the MLE evaluation of $\hat{a}$ at $\mathbf{r}_y$ can be expressed as:
\[
  \hat{a}(\mathbf{r}_y) \;=\; \sum_{\mathbf{j} \in \bitset^m} \hat{A}(\mathbf{r}_y, \mathbf{j}) \cdot \hat{z}(\mathbf{j}).
\]
The parties can therefore check the claim $v_a = \hat{a}(\mathbf{r}_y)$ via a sumcheck, resulting in a verifier evaluation of $\hat{A}(\mathbf{r}_y, \mathbf{r}_x) \cdot \hat{z}(\mathbf{r}_x)$ at a random $\mathbf{r}_x \in \F^m$. For the moment let us assume that the verifier computes $\hat{A}(\mathbf{r}_y, \mathbf{r}_x)$ on its own and so we are left only with a claim on $\hat{z}(\mathbf{r}_x)$.\footnote{Since $A$ is a fixed matrix, this computation can be done independently by the verifier. However, for a general matrix $A$ it can be costly. In the actual protocol we will leverage the fact that the matrix is structured to reduce the verifier's cost (see \cref{sec:impl:batch}).} We refer to such a reduction---from an MLE claim on $\hat{a}$ to an MLE claim on $\hat{z}$ via the matrix MLE $\hat{A}$---as a \emph{lincheck} on $A$.

The same applies with $B, C$ in place of $A$. Batching the three linchecks via \cref{rem:batch-sumcheck}, the three lincheck claims reduce to a single claim $v_z = \hat{z}(\mathbf{r}_x)$ at a random point $\mathbf{r}_x \in \F^m$ as well as evaluation claims on $\hat{A},\hat{B},\hat{C}$.

\item \textbf{Consistency with public input.}\footnote{This step will be replaced with input/output constraints (\Cref{sec:impl:io}) when we move to Batch R1CS in the next section.} For simplicity, assume that the public input length is $2^{k}$ and $\mathbf{z}$ is arranged so that $\mathbf{z}(0^{m-k},\cdot)$ should be equal to $\mathbf{x}$. The verifier tests consistency via an MLE claim $v' = \hat{z}(\mathbf{r}')$, where $\mathbf{r}' = (0^{m-k}, \mathbf{r})$ for randomly sampled $\mathbf{r} \in \F^{k}$ and $v' = \hat{x}(\mathbf{r})$ is computed directly by the verifier.

\item \textbf{Open the trace.} The prover and verifier run the PCS opening phase on $\mathsf{cm}$ with evaluations $(\mathbf{r}_x, v_z)$ and $(\mathbf{r'}, v')$. The verifier accepts if and only if PCS verifier accepts and every check above passes.
\end{enumerate}

\paragraph{Completeness.}
Let $\mathbf{x}$ be a public input and suppose the (honest) prover holds a witness $\mathbf{w}$ with $\mathbf{z} = (\mathbf{x}, \mathbf{w})$ being a satisfying assignment.
Then $\eq(\mathbf{r}, \mathbf{i}) \cdot (\hat{a}(\mathbf{i}) \cdot \hat{b}(\mathbf{i}) - \hat{c}(\mathbf{i}))$ vanishes on $\bitset^m$, and, by completeness of zerocheck and lincheck, it holds that $v_{z} = \hat{z}(\mathbf{r}_{x})$.
Since $\mathbf{z}(0^{m-k}, \cdot) \equiv \mathbf{x}$, it also holds that $v' = \hat{z}(\mathbf{r}')$.
Finally, by completeness of the PCS, the verifier always accepts.

\paragraph{Soundness.}
Let $\mathbf{x}$ be a public input and suppose there does not exist a witness $\mathbf{w}$ with $(\mathbf{x}, \mathbf{w})$ being a satisfying assignment.
With all but the PCS's binding error probability, the commitment transcript $\mathsf{cm}$ is associated with a subset $S_{\mathsf{cm}}$ containing at most one multilinear polynomial.
If $S_{\mathsf{cm}}$ is empty, then the verifier rejects with all but the PCS's soundness error probability.
Thus, we focus on the scenario where $S_{\mathsf{cm}}$ contains only the multilinear extension of some function $\mathbf{z}_{\mathsf{cm}} \colon \bitset^m \to \F$.
There are two cases:
\begin{itemize}
  \item If $\mathbf{z}_{\mathsf{cm}}$ is a satisfying assignment, then it must be inconsistent with $\mathbf{x}$, i.e., $\hat{z}_{\mathsf{cm}}(\cdot, 0^{m-k}) \not\equiv \mathbf{x}$. By Schwartz-Zippel, with all but $k/|\F|$ probability, it holds that $\hat{z}_{\mathsf{cm}}(\mathbf{r}') \neq v'$. Assuming this, the verifier rejects with all but the PCS's soundness error probability.
  \item If $\mathbf{z}_{\mathsf{cm}}$ is not a satisfying assignment, then there exists $\mathbf{i} \in \bitset^{m}$ such that $\hat{a}_{\mathsf{cm}}(\mathbf{i}) \cdot \hat{b}_{\mathsf{cm}}(\mathbf{i}) - \hat{c}_{\mathsf{cm}}(\mathbf{i}) \neq 0$, where $\hat{a}_{\mathsf{cm}}, \hat{b}_{\mathsf{cm}}, \hat{c}_{\mathsf{cm}}$ are the multilinear extensions of $A \mathbf{z}_{\mathsf{cm}}, B \mathbf{z}_{\mathsf{cm}}, C \mathbf{z}_{\mathsf{cm}}$.
    The zerocheck and lincheck protocols are ran with respect to $\mathbf{z}_{\mathsf{cm}}$; with all probability $O(m/|\F|)$, it holds that $\hat{z}_{\mathsf{cm}}(\mathbf{r}_{x}) \neq v_{z}$ (or one of the verifier's other checks already fails).
    Assuming this, the verifier rejects with all but the PCS's soundness error probability.
\end{itemize}
We conclude that the overall soundness error is at most $O(m/|\F|)$, plus the PCS's binding and soundness errors.

\begin{remark}
  A straightforward implementation of the baseline protocol over $\Ftwo$ is quite inefficient. In particular, when used over $\F_2$, the zerocheck protocol introduces an exorbitant $O(2^m \cdot \log(|\F|))$ prover cost. Indeed, this is why the Spartan protocol is designed for R1CS over large fields. In the next section we discuss how to efficiently prove R1CS over $\F_{2}$.
\end{remark}

\section{Flock}\label{sec:impl}

We describe the different optimizations that we utilize in \sysname to make the baseline protocol extremely efficient and practical. These optimizations build on, and improve, a long line of prior research in the field.
For completeness, we describe the full protocol in Appendix~\ref{app:full-protocol}.

Since some of the optimizations are related to each other, we do not list them in order of significance but rather based on a topological sort of the dependency graph.

\subsection{Batch R1CS}\label{sec:impl:batch}

The motivating workloads for \sysname are \emph{batch computations}: $K$ independent invocations of the same primitive --- e.g., multiple Keccak permutations or SHA2 compressions. Each individual computation can be described by the same R1CS instance over $\Ftwo$. Rather than proving the $K$ instances separately, we stack them into a single R1CS with dimension that is $K$ times larger. Utilizing a technique from \cite{BCGGRS19} (extended to the multilinear setting \cite{TKPS22,HR22}), such a batch-R1CS can be handled \emph{much} more efficiently than a generic instance.

Concretely, let the per-invocation (\emph{base}) matrices be $A_0, B_0, C_0 : \bitset^{m_0} \times \bitset^{m_0} \to \F_2$ and let $K = 2^k$ denote the number of instances. These matrices correspond to verification of a single copy. The batched witness is
\[
  \mathbf{z} : \bitset^k \times \bitset^{m_0} \to \F_2
\]
where we denote by $\mathbf{z}_i = \mathbf{z}(i,\cdot)$. The batched matrices are the block-diagonal matrices
\[
  A \;=\; I_K \otimes A_0, \quad B \;=\; I_K \otimes B_0, \quad C \;=\; I_K \otimes C_0,
\]
where $I_K$ denotes the $K \times K$ identity matrix and $\otimes$ is the tensor product. To see this observe that $(I_K \otimes A_0) \cdot z$ simply applies the matrix $A_0$ to each component of $z$ separately.

These are the matrices fed to the baseline protocol of \cref{sec:construction}, with the role of $m$ played by $m := k + m_0$. Satisfaction of $(A\mathbf{z}) \circ (B\mathbf{z}) = C\mathbf{z}$ is equivalent to simultaneous satisfaction of all $K$ underlying instances. We refer to $A\mathbf{z}$, $B\mathbf{z}$, and $C\mathbf{z}$ as the \emph{$A$-side}, \emph{$B$-side}, and \emph{$C$-side} of the constraint, respectively.

\paragraph{Lincheck via block structure.} Decompose each hypercube index $\mathbf{i} \in \bitset^m$ as $\mathbf{i} = (\mathbf{i}_{\mathrm{out}}, \mathbf{i}_{\mathrm{in}})$ with $\mathbf{i}_{\mathrm{out}} \in \bitset^k$ the batch index and $\mathbf{i}_{\mathrm{in}} \in \bitset^{m_0}$ the within-instance index. The MLE of a block-diagonal matrix factors as
\[
  \hat{A}\bigl((\mathbf{r}_{\mathrm{out}}, \mathbf{r}_{\mathrm{in}}),\, (\mathbf{j}_{\mathrm{out}}, \mathbf{j}_{\mathrm{in}})\bigr) \;=\; \eq(\mathbf{r}_{\mathrm{out}}, \mathbf{j}_{\mathrm{out}}) \cdot \hat{A}_0(\mathbf{r}_{\mathrm{in}}, \mathbf{j}_{\mathrm{in}}),
\]
and similarly for $\hat{B}, \hat{C}$. Substituting into the lincheck-on-$A$ identity from \cref{sec:construction} and pulling the $\mathbf{j}_{\mathrm{out}}$-only factor out of the inner sum,
\begin{align*}
  \hat{a}(\mathbf{r}_y) &= \sum_{\mathbf{j}_{\mathrm{out}}} \eq(\mathbf{r}_{y,\mathrm{out}}, \mathbf{j}_{\mathrm{out}})
  \cdot \sum_{\mathbf{j}_{\mathrm{in}}} \hat{A}_0(\mathbf{r}_{y,\mathrm{in}}, \mathbf{j}_{\mathrm{in}}) \cdot \hat{z}(\mathbf{j}_{\mathrm{out}}, \mathbf{j}_{\mathrm{in}}) \\
  &= \sum_{\mathbf{j}_{\mathrm{in}}} \hat{A}_0(\mathbf{r}_{y,\mathrm{in}}, \mathbf{j}_{\mathrm{in}}) \cdot \hat{z}(\mathbf{r}_{y,\mathrm{out}}, \mathbf{j}_{\mathrm{in}}),
\end{align*}
a sumcheck over only $m_0$ variables (rather than $m = k + m_0$). Two key consequences:
\begin{itemize}
  \item The verifier evaluates only the base MLEs $\hat{A}_0, \hat{B}_0, \hat{C}_0$ at a single point $(\mathbf{r}_{y,\mathrm{in}}, \mathbf{r}_{x,\mathrm{in}}) \in \F^{m_0} \times \F^{m_0}$; the outer $k$ rounds contribute only a cheap extra $\eq$-factor evaluation.

  In particular, the matrices are sufficiently small that the verifier can compute their multilinear extension at the desired point by itself.

  \item Even more significantly, the prover only needs to ``fold'' $z$ into $\hat{z}(\mathbf{r}_{y,\mathrm{out}}, \cdot )$ and the remaining work is independent of $k$; only the inner $m_0$ rounds operate on the small base matrices $A_0, B_0, C_0$ (rather than the $K$-times-larger matrices $A, B, C$).
\end{itemize}

The latter point is crucial and drastically reduces the lincheck cost in \sysname. In particular, we observe that when using this batching technique the dominant part in the protocol by far becomes zerocheck, and so most of the optimizations described below attempt to extensively optimize that part of the protocol.

\subsection{Univariate skip}\label{sec:impl:uskip}

\paragraph{Motivation.} The zerocheck of \cref{sec:construction} runs sumcheck on a summand $\eq(\mathbf{r}, \mathbf{i}) \cdot (\hat{a}(\mathbf{i}) \hat{b}(\mathbf{i}) - \hat{c}(\mathbf{i}))$ whose constraint values $\hat{a}, \hat{b}, \hat{c}$ are bits. Recall that after the initial round of sumcheck, the verifier samples a challenge $r_1 \in \F$ from a large extension field (e.g.\ $|\F| = 2^{128}$), and the residual claim is a sumcheck over $\bitset^{m-1}$ on the partial evaluations $\hat{a}(r_1, \cdot), \hat{b}(r_1, \cdot), \hat{c}(r_1, \cdot)$. The hypercube halves, but each cell is now an element of $\F$ rather than a bit: a size-$n = 2^m$ \emph{bit}-valued claim has been replaced by a size-$n/2$ \emph{$\F$}-valued claim --- a $\log(|\F|)/2$ blowup of the working state. This is a massively unaffordable cost.

\paragraph{The univariate skip.} We utilize a popular technique due to Gruen~\cite{Gruen24}, called \emph{the univariate skip}.\footnote{A related technique, also geared at reducing the cost of sumcheck over $\F_2$, was proposed earlier by Holmgren and Rothblum \cite[Lemma 6.4]{HR22}. Their technique is asymptotically faster than the univariate skip, but seems concretely worse as it requires sending an additional oracle.}

Fix a parameter $k_{\mathrm{s}} \in [m]$ (concretely $k_{\mathrm{s}} = 6$ in our implementation) and view each of $a, b, c : \bitset^m \to \F_2$ as a matrix of shape $2^{m - k_{\mathrm{s}}} \times 2^{k_{\mathrm{s}}}$, indexed by a row coordinate $\mathbf{x} \in \bitset^{m - k_{\mathrm{s}}}$ (the variables \emph{kept} for the residual sumcheck) and a column coordinate $\mathbf{y} \in \bitset^{k_{\mathrm{s}}}$ (the variables \emph{skipped} in the first round). Fix a $k_{\mathrm{s}}$-dimensional $\F_2$-subspace $S \subset \F$ of size $2^{k_{\mathrm{s}}}$ and identify the $2^{k_{\mathrm{s}}}$ columns of the matrix $a$ with the points of $S$. We view each row $\mathbf{x}$ of $a$ as a table of evaluations of a unique univariate polynomial $\alpha_\mathbf{x} : \F \to \F_2$ of degree $< 2^{k_{\mathrm{s}}}$ at the points of $S$. We refer to $\alpha_\mathbf{x}$ as the \emph{low-degree extension} (LDE) of the row; analogously $\beta_\mathbf{x}, \gamma_\mathbf{x}$ are the LDEs of the corresponding rows of $b$ and $c$.

Instead of running $k_{\mathrm{s}}$ vanilla sumcheck rounds --- each consuming one large-field challenge --- we collapse them into a single round. The verifier samples a zerocheck challenge $\mathbf{r} \in \F^{m - k_{\mathrm{s}}}$ over the kept variables only, and the prover computes and sends the univariate
\begin{align*}
  P(\lambda) \;:=\; \sum_{\mathbf{x} \in \bitset^{m - k_{\mathrm{s}}}} \eq(\mathbf{r}, \mathbf{x}) \cdot \bigl( \alpha_\mathbf{x}(\lambda) \cdot \beta_\mathbf{x}(\lambda) - \gamma_\mathbf{x}(\lambda) \bigr)
\end{align*}
of degree $\le 2 \cdot (2^{k_{\mathrm{s}}} - 1)$, transmitted as evaluations at the points of a disjoint output coset $\Lambda := S + \delta \subset \F_{2^{k_{\mathrm{s}}}}$, for fixed $\delta \in \F \backslash S$. (It suffices to send evaluations on $S+\delta$ since $P$ is known to be identically $0$ on $S$.)

The verifier samples a single challenge $\lambda \in \F$ and the residual sumcheck of $m - k_{\mathrm{s}}$ rounds proceeds over $\F$ on the partially evaluated $\hat{a}(\lambda, \cdot), \hat{b}(\lambda, \cdot), \hat{c}(\lambda, \cdot)$.

The win is that the prover crosses the $\F_2 \to \F$ boundary exactly once. The $k_{\mathrm{s}}$ rounds of bit-level $\F_2$ structure are preserved end-to-end via the row-as-polynomial view, and the field upgrade is paid in a single round message rather than in $k_{\mathrm{s}}$ separate witness blowups.

\paragraph{LDE via lookup tables.} In the univariate skip round, the prover's hot loop computes, for each row $\mathbf{x}$, the LDE evaluations $\alpha_\mathbf{x}|_\Lambda, \beta_\mathbf{x}|_\Lambda, \gamma_\mathbf{x}|_\Lambda$ on the output coset. Since we extend $0/1$ valued functions, by taking $S \cup \Lambda$ to be contained in a subfield, we can ensure that all of the computed values are in the subfield.

Nevertheless, the question is how to compute this massive number of small LDEs. One option is to just use the NTT for every row; or, more precisely, an inverse NTT to recover the coefficients of each polynomial from its evaluations on $S$, and then a forward NTT to extend to $\Lambda$. This turns out to be highly inefficient in practice due to two reasons. First, the $O(|S| \log |S|)$ asymptotic cost of the NTT does not work well on the very small input sizes that we operate on. Second, the NTT does not exploit the fact that each row is bit-valued: it pays the full large-field arithmetic at every butterfly round.\footnote{One could potentially get some small savings by leveraging the small values in the first few butterfly rounds.}

Following \cite{Irr}, we can instead precompute the entire LDE map as a lookup table. Fix a chunk size $\tau \in \N$ in bits, such that $\tau$ divides $|S|$ and $\log(|\F|)$ (e.g., a good number to bear in mind is $\tau=8$, corresponding to a single byte). Taking $S \cup \Lambda$ to be contained in the subfield\footnote{Concretely, if $\tau=8$, we can use the AES field $\F_{2^8}$ \cite{Irr}. The usage of the AES field as a subfield, in a somewhat similar manner was first proposed by \cite{BBHR18}.} $\F_{2^\tau} \subseteq \F$, we can view the composite map on a single column as,
\[
  M \;:=\; \mathrm{NTT}_\Lambda \circ \mathrm{iNTT}_S \;:\; \F_{2^\tau}^{|S|} \;\to\; \F_{2^\tau}^{|\Lambda|},
\]
is $\F_2$-linear, so packing the row's bits into $|S|/\tau$ chunks $a_0(\mathbf{x}), \ldots, a_{|S|/\tau - 1}(\mathbf{x}) \in [0, 2^\tau)$ and tabulating $M$'s action per chunk position gives
\[
  \alpha_\mathbf{x}|_\Lambda \;=\; \bigoplus_b M_b\bigl[\,a_b(\mathbf{x})\,\bigr],
\]
where $M_b[v] \in \F_{2^\tau}^{|\Lambda|}$ encodes the action of $M$ on a chunk of value $v$ at position $b$. Each LDE call is then a fixed number of chunk lookups and XORs, sidestepping the NTT entirely.

\paragraph{New optimization: compressed lookup table.} Notice that the above approach does not leverage any structure of the matrix $M$. While we do not know how leverage the full NTT structure, we observe that $M$ still has useful structure that we can exploit, which reduces the size of the lookup table used by an $|S|/\tau$ factor. This is significant as we observe that the compressed table can be resident in the L1 cache, thereby giving a significant speedup in the hot loop.

Specifically, $M$ satisfies the following \emph{translation-invariance} property: shifting the column index by $\tau b$, for any $b$, is equivalent to XOR-shifting the row index by $\tau b$. Formally, for every $i \in [|\Lambda|]$, $b \in \{0, 1, \ldots, |S|/\tau - 1\}$, and $j \in \{0, 1, \ldots, \tau - 1\}$,
\[
  M\bigl[i,\, (\tau b) + j\bigr] \;=\; M\bigl[\,i \oplus (\tau b),\, j\,\bigr],
\]
where the column index arithmetic $(\tau b) + j$ is over the integers (in particular $\tau b$ stands for integer multiplication) whereas the row-index $\oplus$ is a bitwise XOR. The proof of this identity is deferred to Appendix~\ref{app:cauchy-shift}.

Define the \emph{single} base table $T$ where for $v \in \bitset^\tau$:
\[
  T[v] \;:=\; M \begin{pmatrix} v \\ \mathbf{0} \end{pmatrix} \;=\; \bigoplus_{j : v_j = 1} M[\cdot,\, j] \;\in\; \F_{2^\tau}^{|\Lambda|},
\]
where $\begin{pmatrix} v \\ \mathbf{0} \end{pmatrix}$ has $v$ in its first $\tau$ entries and zeros in the remaining $|S| - \tau$ entries. The translation-invariance of $M$ implies $M_b[v][i] = T[v]\bigl[i \oplus (\tau b)\bigr]$ for any chunk position $b$. Letting $r_b := T\bigl[a_b(\mathbf{x})\bigr] \in \F_{2^\tau}^{|\Lambda|}$ be the $b$-th chunk's row of the base table, the compressed-lookup form of the LDE is
\[
  \alpha_\mathbf{x}|_\Lambda [i] \;=\; \bigoplus_{b=0}^{|S|/\tau - 1} r_b\bigl[\,i \oplus (\tau b)\,\bigr],
\]
i.e.\ a single base-table access per chunk: instead of keeping $|S|/\tau$ position-specific tables, we read from the same table $T$ for every chunk and just XOR $\tau b$ into the index $i$ to account for the chunk position, which costs one additional XOR per output entry.

Concretely, taking $\tau = 8$ (so each chunk is a byte) and $|S| = |\Lambda| = 64$, the per-chunk-table version occupies $(|S|/\tau) \cdot 2^\tau \cdot |\Lambda| = 8 \cdot 256 \cdot 64 = 128\,\mathrm{KB}$, which may spill out of the L1 cache\footnote{On our benchmark M4 processors the L1 data cache is exactly 128KB.}; the compressed table is just $2^\tau \cdot |\Lambda| = 256 \cdot 64 = 16\,\mathrm{KB}$.

The two approaches do roughly the same arithmetic work per LDE call: $|S|/\tau$ chunk lookups returning length-$|\Lambda|$ vectors, accumulated by $|S|/\tau$ vector XORs of length $|\Lambda|$ --- i.e.\ $|S|/\tau \cdot |\Lambda|$ byte loads and $|S|/\tau \cdot |\Lambda|$ byte XORs in both. The compressed lookup adds only a constant cost per chunk to compute the index offset $\tau b$ (a single XOR amortized over the $|\Lambda|$ output entries), which is dwarfed by the main loop. The win is therefore not in operation count but in cache locality: one memory-hot base table reused across all chunk positions instead of $|S|/\tau$ separately materialized tables that compete for cache.

\subsection{Friendly challenges}
\label{sec:impl:friendly}

The zerocheck protocol (as described in \cref{sec:prelim:sumcheck:zero}) weighs the constraint polynomial by an eq-factor $\eq(\mathbf{r}, \cdot )$ for a challenge vector $\mathbf{r} \in \F^m$ that the verifier samples from the large extension field $\F$. Each coordinate $r_i$ shows up as a multiplier inside the prover's per-round arithmetic.

Motivated by the fact that multiplication by a generic $\F$ element is expensive, Dao and Thaler~\cite{DT24} observed that a small number of coordinates of $\mathbf{r}$ can be \emph{pinned} to fixed elements of $\F$.

Specifically, Dao and Thaler work over a tower field $\F$ and show that one can pin coordinates of $\mathbf{r}$ to the $7$ tower-level generators $x_1, \dots, x_7$, each of which lives in a (progressively larger) subfield of $\F$.

While this avoids some of the expensive multiplications, in general the tower representation seems significantly less efficient than other representations. For this reason Binius64~\cite{Irr} uses a non-tower representation (specifically they use the GHASH representation) and leverages the \cite{DT24} trick differently. Specifically they pin $3$ of the coordinates of $\mathbf{r}$ to fixed elements of the subfield $\F_{2^8} \subset \F$, multiplication by which is significantly cheaper than by a generic element of the large field $\F$. Specifically, viewing elements of $\F_{2^8}$ as univariates over $\F_2$, Binius chooses the three fixed elements $r_1=x$, $r_2=x^2$, and $r_3=x^4$.

\paragraph{Soundness.} We briefly recall the soundness argument from \cite{DT24}. Replacing seven coordinates of $\mathbf{r}$ with hardcoded constants restricts the zerocheck verifier's challenge from a uniformly random point in $\F^m$ to a uniformly random point in an $(m-7)$-dimensional affine subspace. Soundness still goes through, provided the seven pinned values are \emph{$\Ftwo$-linearly independent}. To see why, recall that the zerocheck reduces a constraint of the form ``$p(\mathbf{x}) = 0$ for all $\mathbf{x} \in \bitset^m$'' to a single evaluation claim at $\mathbf{r}$. If however $p$ is known to take $0/1$ values (which is the case for us), then $p$ vanishes on $\bitset^m$ if and only if $\sum_{\mathbf{b}\in\bitset^7}\alpha_\mathbf{b}\cdot p(\mathbf{b},\mathbf{x}) = 0$ for every $\mathbf{x}\in\bitset^{m-7}$, for every fixed sequence $\alpha_{\mathbf{b}}$ that are $\Ftwo$-linearly independent.

Thus, one can set $\mathbf{r} = (r_1,\dots,r_7) \in \F^7$ to be any fixed vector as long as the corresponding $(\eq(\mathbf{r},\mathbf{b}))_{\mathbf{b} \in \bitset^7}$ are linearly independent.

\paragraph{Our choice: a geometric progression.} We pick a different set of constant points. Let $d$ denote the extension degree of $\F$ (i.e., $\F$ is isomorphic to $\F_{2^d}$). For $i \in [d]$, we set
\[
  r_i \;:=\; \frac{x^{2^{i-1}}}{1 + x^{2^{i-1}}},
\]
where both the numerator and denominator refer to the polynomial representation of elements in $\F$ (and the division is over the field). The reason we do so is that $r_i/(1+r_i) = x^{2^{i-1}}$. Using this fact, we observe that for $\mathbf{r} = (r_1,\dots,r_d)$ and any $\mathbf{b} = (b_1,\dots,b_d) \in \bitset^d$, applying \cref{fact:eq-product}:
\begin{align*}
  \eq(\mathbf{r},\mathbf{b})
  &= \prod_{i \in [d]} (1 + r_i)^{1 - b_i} \cdot r_i^{b_i} \\
  &= C \cdot \prod_{i\in [d]} \bigl(x^{2^{i-1}}\bigr)^{b_i}
  \;=\; C \cdot x^{\mathrm{int}(\mathbf{b})},
\end{align*}
where $C = \prod_{i \in [d]} (1+r_i)$ is a fixed constant, and $\mathrm{int}(\mathbf{b}) := \sum_{i\in [d]} 2^{i-1}b_i$ is the integer corresponding to $\mathbf{b}$. Thus, the $2^d$ weights form the geometric progression:
\[
\{C,\ Cx,\ Cx^2,\dots,\ Cx^{2^d-1}\}.
\]

The benefit is structural: multiplication by $x^k$ in the field is just a $k$-bit shift followed by a modular reduction. To exploit this, the prover splits the sum into outer and inner parts:
\[
  \sum_{\mathbf{b} \in \bitset^m } \eq(\mathbf{r},\mathbf{b}) \cdot f(\mathbf{b}) \;=\; \sum_{\mathbf{b}_{\mathrm{out}} \in \bitset^{m-d}} \eq(\mathbf{r}_{\mathrm{out}},\mathbf{b}_{\mathrm{out}}) \cdot g(\mathbf{b}_{\mathrm{out}}),
\]
where the inner factor is
\[
  g(\mathbf{b}_{\mathrm{out}}) \;:=\; \sum_{\mathbf{b}_{\mathrm{in}} \in \bitset^{d}} \eq(\mathbf{r}_{\mathrm{in}},\mathbf{b}_{\mathrm{in}}) \cdot f(\mathbf{b}_{\mathrm{out}},\mathbf{b}_{\mathrm{in}}).
\]
Using our fixed choice of $\mathbf{r}_{\mathrm{in}}$, in computing $g$, the $2^d$ scalar multiplications by the $\eq(\mathbf{r}_{\mathrm{in}},\cdot)$ weights are replaced by $2^d$ bit shifts, followed by a single modular reduction at the end of each $2^d$-size chunk. The fixed scalar $C$ commutes through the outer sum and is absorbed once at protocol startup --- with zero cost in the hot loop.

\paragraph{Two-level implementation over $\Feight$ and $\Fbig$.} The discussion above takes place in a single extension field $\F$ for simplicity. In our implementation, following the heavily optimized implementation of \cite{Irr}, we use the field $\F = \Fbig$ in GHASH form $\Fbig = \Ftwo[\gamma] / (\gamma^{128} + \gamma^7 + \gamma^2 + \gamma + 1)$, and we exploit the subfield containment $\Feight \subset \Fbig$ to apply the trick \emph{twice}. Three pinned coordinates are placed in $\Feight$ using its standard AES representation $\Feight = \Ftwo[\alpha] / (\alpha^8 + \alpha^4 + \alpha^3 + \alpha + 1)$ with generator $\alpha$, and four more are placed in $\Fbig$ using $\gamma$. Thereby we derive a total of $2^3 \cdot 2^4 = 128$ pinned-eq weights of the form $C \cdot \alpha^k \cdot \gamma^j$ for $k \in [0,8)$ and $j \in [0,16)$, which replace $7$ coordinates of $\mathbf{r}$ in total.

The two levels are handled differently in the hot loop. The $\Feight$ level uses the shift-then-reduce of the previous paragraph, producing one $\Feight$ scalar $y_j$ per medium index $j \in [0, 16)$. The $\Fbig$ level, in contrast, is absorbed entirely into a precomputed lookup table $T[j][v] := \gamma^j \cdot \varphi_8(v) \in \Fbig$ for $j \in [0, 16)$ and $v \in [0, 256)$, where $\varphi_8 \colon \Feight \hookrightarrow \Fbig$ is the subfield embedding. Built once at protocol startup, this $16 \times 256$ table (a one-time $64$~KB precomputation) fuses the $\Feight$-to-$\Fbig$ conversion with the $16$-fold $\gamma$-progression. 
The constants commute through the outer sum and are absorbed once at protocol startup.

Importantly, we have verified that our fixed choice of constants does indeed generate linearly independent vectors over $\Ftwo$.

\subsection{Skipping $c$}
\label{sec:impl:c}

Recall that the goal in our use of the zerocheck protocol is to reduce checking that $\hat{a} \cdot \hat{b} - \hat{c}$ is identically zero on $\bitset^m$ into individual multilinear evaluation claims on $\hat{a}$, $\hat{b}$ and $\hat{c}$. In this section we describe an optimization that reduces the cost of processing $c$ in this protocol.

\paragraph{Skipping $c$: First Attempt.} A trivial observation is that we can rewrite the zerocheck expression
\[
  \sum_{\mathbf{i} \in \bitset^m} \eq(\mathbf{r}, \mathbf{i}) \cdot \bigl(\hat{a}(\mathbf{i}) \cdot \hat{b}(\mathbf{i}) - \hat{c}(\mathbf{i})\bigr) \;\stackrel{?}{=}\; 0
\]
as
\begin{align}\label{eq:skip_c}
  \sum_{\mathbf{i} \in \bitset^m} \eq(\mathbf{r}, \mathbf{i}) \cdot \hat{a}(\mathbf{i}) \cdot \hat{b}(\mathbf{i}) \stackrel{?}{=}   \sum_{\mathbf{i} \in \bitset^m} \eq(\mathbf{r}, \mathbf{i}) \cdot \hat{c}(\mathbf{i}),
\end{align}
and that the RHS is simply equal to $\hat{c}(\mathbf{r})$.

Thus, the prover computes $v=\hat{c}(\mathbf{r})$, sends it to the verifier and they run a sumcheck only on the LHS, avoiding any further processing of $c$. (Indeed, the claim $v=\hat{c}(\mathbf{r})$ is part of the output of the protocol.) At first glance this seems to reduce the overall cost of zerocheck by nearly $33\%$.

While this idea has merit, it also has a downside. Usually in the zerocheck protocol, when sending the first round sumcheck polynomial, the verifier knows a priori that its value on $0$ and $1$ have to be $0$ and so the prover can compute one less evaluation. When employing the univariate skip (see \cref{sec:impl:uskip}), with parameter $k_{\mathrm{s}}$, the zerocheck optimization means that we only need to report roughly $2^{k_{\mathrm{s}}}$ evaluations rather than twice that. Concretely, the original zerocheck round-1 message $P(\lambda)$ has degree $2(2^{k_{\mathrm{s}}} - 1)$ and vanishes on $S$, so the prover transmits only its $2^{k_{\mathrm{s}}}$ evaluations on the disjoint coset $\Lambda$. Once $c$ is split off as in \cref{eq:skip_c}, the LHS sumcheck is no longer a zerocheck (its target value is $\hat{c}(\mathbf{r})$ rather than $0$), so this saving is lost and the prover has to send all $2 \cdot 2^{k_{\mathrm{s}}} - 1$ evaluations.

\paragraph{An improved strategy.} To allow both optimizations to live side-by-side (at a minimal cost) we revisit \cref{eq:skip_c}. Two observations make this work.

\emph{First,} we keep the zerocheck saving by running just the first round of sumcheck on the RHS as well: instead of splitting $\hat{c}$ off up-front, we have the prover send the round-1 messages $P^{ab}(\lambda)$ \emph{and} $P^{c}(\lambda)$ separately, where $P^{ab}$ is the univariate skip message for the LHS and $P^{c}$ is the corresponding message for the RHS. The verifier checks that $P^{ab} - P^{c}$ vanishes on $S$ (this is the original zerocheck identity, now applied to the difference) and samples a single challenge $\lambda \in \F$. The $\hat{c}$-claim is then read off directly as $P^{c}(\lambda) = \hat{c}(\lambda, \mathbf{r})$, so $c$ still drops out of the residual sumcheck.

\emph{Second,} computing $P^{c}$ is itself cheap. At first glance, the univariate skip on the RHS would require computing the LDEs of all $2^{m - k_{\mathrm{s}}}$ rows of $c$ (viewing $c$ as a $2^{m - k_{\mathrm{s}}} \times 2^{k_{\mathrm{s}}}$ matrix over $\Ftwo$, as in \cref{sec:impl:uskip}) --- which would in fact be \emph{more} expensive than processing $c$ alongside $a, b$ in the original protocol. But by linearity, we can first aggregate the rows of $c$ under the eq-weights $\eq(\mathbf{r}, \cdot)$ to produce a single $\Fbig$-valued vector of length $2^{k_{\mathrm{s}}}$, and then take a single LDE of that vector to obtain $P^{c}$ on $\Lambda$. This makes $P^{c}$ extremely cheap to compute.

\paragraph{One more optimization.} The way we instantiate the R1CS from our circuits has a structural property: the constraint matrix $C_0$ is simply the identity matrix, so $c = z$. Thus, there is never a need to materialize $c$ separately --- the existing $z$-buffer plays the role of $c$ wherever needed. Combined with the optimizations above, the $\hat{c}$-claim $\hat{c}(\lambda, \mathbf{r})$ is a direct claim on $\hat{z}$ at the same point. Thus, the vector $c$ also does not need to be ``lin-checked''.

\begin{remark}
  One caveat of these optimizations is that the claim that we get on $\hat{c}$ is at a different point than the one we get on $\hat{a}$ and $\hat{b}$. We deal with this using the known batching techniques of~\cite{RR24,HyperPlonk23}, but note that there is a small cost associated with that.
\end{remark}

\subsection{Circuit walking}
\label{sec:impl:walker}

We focus throughout this subsection on the constraint matrix $A_0$; the same construction applies to $B_0$ (and the optimization of \cref{sec:impl:c} fixes $C_0 = I$ so no further optimizations are needed for it). In a nutshell, we describe an optimization that enables us to avoid ever materializing the matrix $A_0$ by following the underlying circuit structure.

\paragraph{R1CS for circuits, and the cost of substitution.} There are multiple ways to map a (Boolean) circuit into an R1CS instance. We highlight two extremes:
\begin{itemize}
\item \emph{All wires.} Commit every internal wire of the circuit as a separate entry of $z$. Each row of $A_0$ has $O(1)$ nonzeros (only the wires immediately feeding the gate's $A$-side), but the witness is large.
\item \emph{Only ANDs.} Commit only the AND-gate outputs (and the inputs/outputs of the whole circuit). Every internal wire is then a fixed $\F_2$-linear combination of these committed values --- evaluate the linear cascades symbolically, substitute everywhere, and the witness shrinks (often by a factor of two or more). The cost is that each row of $A_0$ now contains nonzeros at every committed column reachable, through the substitution, from the gate's $A$-side --- typically a dense linear combination whose fan-in scales with the depth of intervening linear computation.
\end{itemize}
The substituted form is much better for almost everything --- PCS, witness storage, witness generation, zerocheck. The catch is that any algorithm that needs to iterate over $A_0$'s nonzeros pays the substituted cost, which can be drastically more expensive than the all-wires equivalent. (In our actual implementation this is most evident in our Keccak arithmetization, which has an extremely dense matrix $A_0$.) Next, we show how to mitigate the cost of this representation.

\paragraph{Warmup: witness generation as a circuit walk.} In the substituted encoding above, only the AND-gate outputs are committed in $z$ (alongside the circuit's inputs and outputs); the XOR-gate outputs are intermediate wires that the prover holds only transiently.

In the protocol we need to generate the vector $a=Az$ (this is part of what is known as \emph{trace} generation). Recall that in our setting $A = I_K \otimes A_0$ and so this can be done via $K$ vector-matrix multiplications by $A_0$.

However, it is easy to see that doing so is wasteful --- a much better option is to simply evaluate the circuit, gate by gate, in topological order. By the definition of $A_0$, the $A$-side input of each AND-gate is exactly the corresponding entry of $A_0 z$, and the $B$-side input is the corresponding entry of $B_0 z$. So this forward walk \emph{is} the matrix--vector products $A_0 z$ and $B_0 z$ --- interleaved with per-gate ANDs and writes back into $z$, and computed at the cost of one $\F_2$-evaluation of the hash circuit, with no $A_0$ or $B_0$ ever materialized.

\paragraph{Lincheck's hot path.} Recall from \cref{sec:impl:batch} that after the block-diagonal collapse, the lincheck prover and verifier only ever interact with the small base matrix $A_0 : \bitset^{m_0} \times \bitset^{m_0} \to \F_2$. The lincheck identity for $A$ requires evaluating, for each Boolean column index $\mathbf{j}_{\mathrm{in}} \in \bitset^{m_0}$, the eq-weighted column marginal
\[
  \eta^A[\mathbf{j}_{\mathrm{in}}] \;:=\; \hat{A}_0(\mathbf{r}_{y,\mathrm{in}}, \mathbf{j}_{\mathrm{in}}) \;=\; \sum_{\mathbf{i}_{\mathrm{in}} \,:\, A_0[\mathbf{i}_{\mathrm{in}}, \mathbf{j}_{\mathrm{in}}] = 1} \eq(\mathbf{r}_{y,\mathrm{in}}, \mathbf{i}_{\mathrm{in}}),
\]
which is then paired against the partial fold of the witness $\hat{z}(\mathbf{r}_{y,\mathrm{out}}, \cdot)$ in an $m_0$-round multilinear sumcheck. Letting $E[\mathbf{i}_{\mathrm{in}}] := \eq(\mathbf{r}_{y,\mathrm{in}}, \mathbf{i}_{\mathrm{in}})$, this is a single matrix--vector product
\[
  \eta^A \;=\; A_0^\top \cdot E
\]
--- the same matrix $A_0$ that appeared in witness generation, just transposed and applied to $E$ in place of $z$. The naive way to compute it is to materialize $A_0$ as a sparse Boolean matrix and, for each row $\mathbf{i}_{\mathrm{in}}$, scatter $E[\mathbf{i}_{\mathrm{in}}]$ into every column appearing in that row, paying one $\F$-addition per nonzero --- in the substituted encoding, exactly the cost we want to avoid.

\paragraph{Walking the circuit backwards.} Since the forward walk produced $A_0 z$ by processing the circuit's XOR and AND gates in topological order, we can produce $A_0^\top E$ by the transposed walk: traverse the same gates in reverse, with $\F$-additions in place of $\F_2$-XORs. Concretely, maintain a running marginal $M^A$ over the circuit's wires, initialized to zero, and process the gates in reverse topological order:
\begin{itemize}
\item At an AND-gate, the gate's R1CS row carries weight $\eq(\mathbf{r}_{y,\mathrm{in}}, \cdot)$ at the row's index. Add this weight into $M^A[w]$ for each wire $w$ on the gate's $A$-side; if $w$ is committed in $z$, deposit directly into the corresponding entry of $\eta^A$.
\item At an XOR-gate $y = x_1 \oplus \cdots \oplus x_k$, add $M^A[y]$ into each of $M^A[x_1], \ldots, M^A[x_k]$. (This is the transpose of XOR: a single output weight is distributed back to all its inputs.)
\end{itemize}
When the walk reaches the circuit's inputs (which are committed in $z$), deposit the remaining $M^A$ into the corresponding entries of $\eta^A$. The result is exactly the column marginal $\eta^A$ that the sparse scatter on the substituted $A_0$ would have produced --- but no substituted matrix ever appears.

\paragraph{Cost.} Each XOR-gate's backward step distributes one $\F$-value to each of its inputs, so the backward pass over the XOR sub-circuit costs as many $\F$-additions as the forward pass costs $\F_2$-XORs during witness generation. The AND-gate work likewise mirrors the forward work --- one scatter per AND-gate. The total walker cost is therefore proportional to the cost of \emph{evaluating the underlying circuit once}, in $\F$-arithmetic instead of $\F_2$-arithmetic, and is independent of how aggressive the substitution is. The walker also keeps memory traffic small: only a small wire-window of $\F$-elements is live at any time, fitting comfortably in cache. By contrast, the sparse scatter on the substituted $A_0$ has nonzero count proportional to (circuit size) $\times$ (substitution fan-in), and at large $2^{m_0}$ its parallel form requires per-thread accumulators of length $2^{m_0}$, whose reduction is memory-bandwidth-bound.

\paragraph{Example: Keccak.} The Keccak-$f$ permutation is $24$ rounds of a $1600$-bit state, each round consisting of a linear part $\theta \circ \rho \circ \pi$ (plus the round constant) followed by $1600$ $\chi$ AND-gates --- $24 \cdot 1600 = 38{,}400$ AND-gates in total, and so (roughly) this many rows in $A_0$. The two encoding choices are very different:
\begin{itemize}
\item \emph{All wires.} Commit every intermediate state $s_0, s_1, \ldots, s_{24}$ and every $\chi$-output $t_0, \ldots, t_{23}$ --- $49 \cdot 1600 = 78{,}400$ bits per Keccak (forcing the R1CS dimension to $m_0 = 17$). Each row of $A_0$ has $O(1)$ nonzeros.
\item \emph{Only ANDs.} Drop $s_1, \ldots, s_{23}$ and commit only $s_0$, $s_{24}$, $t_0, \ldots, t_{23}$ --- $26 \cdot 1600 = 41{,}600$ bits per Keccak ($m_0 = 16$, a nearly $2\times$ witness shrink). But each row of $A_0$ now spreads through the substitution across $s_0$ and $t_0, \ldots, t_{r-1}$, and on average is several hundred times denser than the all-wires counterpart.
\end{itemize}
The circuit walker lets us keep the substituted encoding without paying the materialization cost: lincheck's hot path is computed in roughly one Keccak evaluation per call --- well under a million $\F$-additions --- instead of iterating through the tens of millions of nonzeros in the materialized substituted $A_0$.

\paragraph{Verifier symmetry.} The walker is run identically by the prover and the verifier. The verifier's call costs the same circuit-evaluation-equivalent number of field additions, which is small compared to the verifier's other per-instance work (it is also independent of the batch size $K$ thanks to the block-diagonal collapse of \cref{sec:impl:batch}).

\subsection{Input/output constraints}
\label{sec:impl:io}

So far we have only proved that each of the $K$ batched instances satisfies the same base R1CS. To express the cross-instance statement that the user actually cares about --- e.g.\ that the instances correspond to a hash chain, or consistency with Merkle paths --- we need to bind the per-instance \emph{input/output (IO) regions} of the witness to an auxiliary glue circuit $G$, as outlined in \cref{sec:intro}. This subsection describes the generic mechanism, and then the specific instantiation of a hash-chain.

\paragraph{Generic IO via slot-aligned regions.} Recall from \cref{sec:impl:batch} that the witness $\mathbf{z} : \bitset^k \times \bitset^{m_0} \to \Ftwo$ is a $K = 2^k$-fold stack of per-instance blocks of size $2^{m_0}$. We adopt a uniform layout convention: \emph{each instance reserves a fixed, byte-aligned slot for each of its IO regions} (e.g.\ the input state and the output state of a hash compression), positioned at fixed coordinates within the block. Concretely, an IO region of $2^{m_r}$ bits is placed at an aligned offset, so the corresponding sub-cube of $\mathbf{z}$ is indexed by $(\mathbf{i}_{\mathrm{out}}, \mathbf{s}, \mathbf{b})$ where $\mathbf{i}_{\mathrm{out}} \in \bitset^k$ is the instance, $\mathbf{s} \in \bitset^{m_0 - m_r}$ are the fixed (constant) high coordinates that select the slot, and $\mathbf{b} \in \bitset^{m_r}$ ranges over the bits of the region.

With this convention, any IO claim expressible as a multilinear evaluation of a region's MLE flows back to a multilinear evaluation of $\hat{\mathbf{z}}$ at a point whose high coordinates are pinned to the slot's $\mathbf{s}$. The glue circuit $G$ is in turn handled by a small auxiliary protocol whose only output is such a claim (or a batch of them), which is then folded into the PCS opening for $\hat{\mathbf{z}}$ alongside the zerocheck/lincheck claims via the standard MLE-batching technique~\cite{RR24,HyperPlonk23}. The cost of $G$ on the prover therefore reduces to (i) the auxiliary protocol itself and (ii) one additional opening claim against $\hat{\mathbf{z}}$ --- both of which we will see are essentially free for the hash-chain.

\paragraph{Hash-chain: the goal.} A \emph{hash-chain} statement asks the prover to show, for public endpoints $x_0, x_{2^n} \in \bitset^{\ell}$, that
\[
  x_{i+1} = h(x_i) \quad\text{for all } i \in \{0, 1, \ldots, 2^n - 1\},
\]
where $h$ is the per-instance hash circuit (here $\ell$ is the input/output width, e.g.\ $\ell = 1600$ for Keccak). The $2^n$ instances are already proved to internally enforce $\mathit{output}_i = h(\mathit{input}_i)$ by the base R1CS; the chain claim is the cross-instance assertion that $\mathit{output}_i = \mathit{input}_{i+1}$ for all $i < 2^n - 1$, together with the public-endpoint constraints $\mathit{input}_0 = x_0$ and $\mathit{output}_{2^n - 1} = x_{2^n}$. The prover is given the full chain $x_0, x_1, \ldots, x_{2^n}$ in the clear, so trace generation remains fully parallel across $i$.

The naive options are unattractive. \emph{Witness aliasing} (committing $x_i$ once and pointing $\mathit{input}_{i+1}$ and $\mathit{output}_i$ to the same physical entry) breaks the block-diagonal structure that drives \cref{sec:impl:batch}. A \emph{permutation argument} via grand product (e.g.\ a Plookup-style or \cite{HyperPlonk23}-style permutation check) works but is wasteful: the underlying relation is just a length-$2^n$ shift, not an arbitrary permutation. We use instead a tailored \emph{shift argument} that exploits this structure and reduces the chain to a single MLE-evaluation claim on $\hat{\mathbf{z}}$ via one short sumcheck.

\paragraph{The shift argument.} Write $\mathrm{In}(\mathbf{i})$ and $\mathrm{Out}(\mathbf{i})$ for the input and output \emph{scalars} of instance $\mathbf{i} \in \bitset^n$, obtained by collapsing the bit dimension of each region at a shared verifier-sampled random point in $\F^{m_r}$ (by Schwartz--Zippel, the bit-level chain reduces to this scalar chain with overwhelming probability). Let $\mathrm{shift}(\mathbf{a}, \mathbf{b}) : \F^n \times \F^n \to \F$ be the multilinear extension of the successor relation $\mathbf{b} = \mathbf{a} + 1$ on $n$-bit integers (i.e.\ for $\mathbf{a}, \mathbf{b} \in \bitset^n$, $\mathrm{shift}(\mathbf{a}, \mathbf{b}) = 1$ iff $\mathbf{b}$ is the integer successor of $\mathbf{a}$, and $0$ otherwise). A closed form is easily derived by splitting on the MSB and tracking the carry, and is evaluable in $O(n)$ field operations. The (interior) chain relation $\mathrm{Out}(\mathbf{i}) = \mathrm{In}(\mathbf{i} + 1)$ is then captured by the identity
\begin{equation}\label{eq:chain-shift-identity}
  \sum_{\mathbf{y} \in \bitset^n} \eq(\boldsymbol{\tau}, \mathbf{y}) \cdot \mathrm{Out}(\mathbf{y})
  \;=\; \sum_{\mathbf{y} \in \bitset^n} \mathrm{shift}(\boldsymbol{\tau}, \mathbf{y}) \cdot \mathrm{In}(\mathbf{y}),
\end{equation}
which the verifier checks via sumcheck after sampling a single challenge $\boldsymbol{\tau} \in \F^n$ --- yielding a single MLE-evaluation claim on $\hat{\mathbf{z}}$ that is batched into the PCS opening alongside the zerocheck and lincheck claims.\footnote{The public endpoint constraints $\mathrm{In}(\mathbf{0}) = \hat{x}_0$ and $\mathrm{Out}(\mathbf{1}^n) = \hat{x}_{2^n}$ are folded into \cref{eq:chain-shift-identity} by standard tricks --- the output endpoint via the boundary term that drops out of $\mathrm{shift}$ in characteristic $2$, and the input endpoint via a random linear combination --- without adding rounds or changing the sumcheck's structure. Details omitted.}

Crucially, because the input and output regions sit in two consecutive slots differing only in a single selector bit, the two MLE evaluations $\mathrm{In}(\boldsymbol{\tau}')$ and $\mathrm{Out}(\boldsymbol{\tau}')$ that the sumcheck would have produced at its challenge point $\boldsymbol{\tau}' \in \F^n$ can be merged into one MLE evaluation of $\hat{\mathbf{z}}$, by running a single extra sumcheck round over the selector bit.

\paragraph{Cost.} The chain claim's evaluation point has a particularly favorable structure: its high coordinates contain $m - n - m_r - 1$ zero entries (the fixed slot-selector bits, all but $s_0$), each of which halves the live support of the eq-weighting in the PCS opening. Leveraging this fact, the chain claim adds well under $5\%$ to the end-to-end prover wall-clock across BLAKE3, SHA-256, and Keccak.

\paragraph{Other IO circuits.} The same template applies to other small glue circuits: a single auxiliary sumcheck that reduces $G$ to one or a few $\hat{\mathbf{z}}$-evaluation claims at points whose high coordinates pin the IO slot. For instance, Merkle paths --- the other IO circuit we instantiate --- can be reduced analogously, with the auxiliary protocol expressing the parent/child relation as a sparse linear combination over the committed leaf and internal-node slots. We omit the details.

\section{Evaluation}
\label{sec:eval}
Our prototype Rust implementation of \sysname and benchmarking scripts are available online\footnote{\href{https://github.com/succinctlabs/flock}{https://github.com/succinctlabs/flock}}; see \texttt{BENCHMARKS.md} to reproduce benchmarks. The code was developed with the assistance of coding agents, specifically Claude Opus 4.7 and 4.8. The field arithmetic and ring-switching implementations are adapted from \cite{Irr}, and the Ligerito implementation is adapted from \cite{Bolt} (all with significant modifications).

We evaluate \sysname in terms of its ability to prove many independent executions of three hash function primitives: Keccak-f[1600] permutations, SHA-256 compressions, and BLAKE3 compressions.
When possible, we compare its performance to other systems:
\begin{itemize}
  \item Binius64~\cite{Irr} (pinned at \texttt{8f21b34}) for Keccak-f[1600], SHA-256, and BLAKE3. Since the original implementation does not support multi-threaded witness generation, we used coding agents to add this feature. Binius64 targets 96 bits of security.
  \item Plonky3~\cite{Plonky3} (pinned at \texttt{109e95c}) for Keccak and BLAKE3. Plonky3's default configuration targets 113 bits under proximity gap \emph{conjectures} that seem problematic following recent attacks \cite{DG25,CS25a,FS25}. That configuration only has 65 proven bits of security. We therefore adjusted the configuration to target 100 bits of proven security.
  \item Hashcaster~\cite{hashcaster} for Keccak. The original implementation does not include a PCS, so we use the implementation from~\cite{bench-hash-in-snark} (pinned at \texttt{1af6fc5}). Hashcaster targets 100 bits of security.
  \item Vega-MC from the vega-prover library~\cite{Spartan2} (pinned at \texttt{0d4f140}), for SHA-256 only. Unlike the post-quantum systems above, Vega-MC is curve-based (over the 256-bit T256 curve) and targets $\approx 128$ bits of computational, \emph{non-post-quantum} security. The library provides two provers; we report its faster \emph{VEGA-MC} folding prover, which amortizes the per-proof cost across the batch, measured at its throughput-optimal batch size ($2^{\bench{sha.nn.mt.exp}}$ compressions). Vega-MC is by far the fastest of all elliptic-curve based implementations we benchmarked (vega-sc, gnark, snarkjs, noir).
\end{itemize}
For all provers we take into account the full per-instance prover time (witness generation, commitment, and proof), matching \sysname's witness-generation-inclusive timing.

We also remark that Binius64 and Vega target general-purpose circuits, whereas Plonky3 (via precompiles), Hashcaster and \sysname are optimized for specific computations. Binius64, Vega-MC support zero-knowledge. The other systems, including \sysname, currently do not.
\paragraph{Evaluation setup.}
All benchmarks were conducted on a single \bench{meta.hardware} ($\bench{meta.ram}$\,GB RAM, $\bench{meta.cores}$ performance cores). Multi-threaded benchmarks use $\bench{meta.threads}$ threads, which matches the number of performance cores available. Our implementation is optimized for this ARM architecture, while other implementations may perform better on other hardware.
When measuring prover throughput, we take the best of three trials, following a warmup trial.

\paragraph{Security parameterization.}
We target $\bench{meta.security}$-bits of security by default (but also briefly discuss a variant targeting $120$-bits of security in \cref{tab:keccak-fixed}). This means that the underlying IOP has round-by-round soundness error \cite{CCHLRR18} at most $2^{-\bench{meta.security}}$ (unconditionally -- without relying on any unproven conjectures). We rely on the cryptographic hardness of the SHA-256 compression function for collision resistance and Fiat-Shamir security.

We utilize the field $\mathbb{F}_{2^{128}}$ throughout. This more than suffices for protocols such as sumcheck, zerocheck and ring-switching. We use Ligerito~\cite{NA25} as our PCS but extend it to the list-decoding regime (see \cref{appx:ligerito}). We set the initial code to be an (interleaved) Reed-Solomon code of rate $\rho=\frac{1}{2}$ and use distance and proximity gaps up to the Johnson bound~\cite{BCIKS20,BCHKS25} (in combination with a small amount of grinding to achieve the desired $2^{-\bench{meta.security}}$ error). One could additionally do grinding for the query phase. In our main benchmarks we do not do so, but in \cref{tab:keccak-fixed} we also give benchmarks for a ``slim'' variant of \sysname that minimizes proof size and does introduce limited query grinding.

Concretely, for proving $\approx 2^{14}$ Keccak permutations the five levels use rates $1/2, 1/4, 1/8, 1/16, 1/32$ with $218, 106, 71, 53, 43$ queries and up to $17, 12, 9, 6, 4$ bits of proximity-gap grinding, respectively. The query counts follow the formula in \cref{rem:query_count} (so they depend only on the rate), whereas the grinding scales mildly with the instance size.

\subsection{Prover throughput}
\label{sec:eval-throughput}
We measure prover throughput, i.e., the number of hash function primitives proven per second.
For each system, we benchmark a range of batch sizes $(2^{10}, 2^{12}, 2^{14}, 2^{16}, 2^{18})$, stopping once the peak memory usage roughly exceeds 10\,GB.\footnote{For Binius64, we limit the maximum batch size to be $2^{14}$, since the offline circuit builder uses prohibitively large amounts of memory past this point.}
We report the maximum prover throughput across the batch sizes, for both single-threaded and multi-threaded executions.
For reference, we also report the native throughput of executing the hash function primitives on a single core, \emph{without} specialized hardware instructions.

\paragraph{Keccak.}
\cref{tab:keccak-throughput} reports maximum proving throughputs for Keccak-f[1600] permutations.
For padding reasons, the batch sizes for \sysname and HashCaster are roughly $1.5 \times$ larger (e.g., $1.5 \cdot 2^{14}$ instead of $2^{14}$) and the batch sizes for Plonky3 are roughly $1.33 \times$ larger.
For single-threaded performance, \sysname reaches a maximum throughput of $\bench{kc.flock.st}$k permutations per second; this represents a $\bench{overhead.kc} \times$ overhead over native execution.
For multi-threaded performance, \sysname reaches a maximum throughput of $\bench{kc.flock.mt}$k permutations per second, which is roughly $\bench{sp.kc.mt.hc} \times$ faster than Hashcaster, $\bench{sp.kc.mt.b64} \times$ faster than Binius64, and $\bench{sp.kc.mt.p3} \times$ faster than Plonky3.

\begin{table}[t]
  \centering
  \begin{tabular}{lrr}
    \toprule
    & \multicolumn{2}{c}{max throughput \textbf{keccak/s}} \\
    \cmidrule(lr){2-3}
    system & \multicolumn{1}{c}{single-thread} & \multicolumn{1}{c}{multi-thread} \\
    \midrule
    \sysname   & $\bench{kc.flock.st}\text{k}_{\,(2^{\bench{kc.flock.st.exp}})}$ & $\bench{kc.flock.mt}\text{k}_{\,(2^{\bench{kc.flock.mt.exp}})}$ \\
    Hashcaster   & $\bench{kc.hc.st}\text{k}_{\,(2^{\bench{kc.hc.st.exp}})}$ & $\bench{kc.hc.mt}\text{k}_{\,(2^{\bench{kc.hc.mt.exp}})}$ \\
    Binius64     & $\bench{kc.b64.st}\text{k}_{\,(2^{\bench{kc.b64.st.exp}})}$  & $\bench{kc.b64.mt}\text{k}_{\,(2^{\bench{kc.b64.mt.exp}})}$ \\
    Plonky3      & $\bench{kc.p3.st}\text{k}_{\,(2^{\bench{kc.p3.st.exp}})}$ & $\bench{kc.p3.mt}\text{k}_{\,(2^{\bench{kc.p3.mt.exp}})}$  \\
    \midrule
    Native & \multicolumn{1}{c}{$\bench{kc.native}\text{M}$} \\
    \bottomrule
  \end{tabular}
  \caption{Proving vs.\ native computing throughputs for Keccak-f[1600] permutations.
    Subscripts indicate the approximate number of permutations at which proving throughput is maximized.}
  \label{tab:keccak-throughput}
\end{table}

\paragraph{SHA-256.}
\cref{tab:sha256-throughput} reports maximum proving throughputs for SHA-256 compressions.
For single-threaded performance, \sysname reaches a maximum throughput of $\bench{sha.flock.st}$k compressions per second; this represents a $\bench{overhead.sha} \times$ overhead over native execution.
For multi-threaded performance, \sysname reaches a maximum throughput of $\bench{sha.flock.mt}$k compressions per second, which is roughly $\bench{sp.sha.mt.b64} \times$ faster than Binius64 and $\bench{sp.sha.mt.nn} \times$ faster than Vega-MC.

\begin{table}[t]
  \centering
  \begin{tabular}{lrr}
    \toprule
    & \multicolumn{2}{c}{max throughput, \textbf{sha-256/s}} \\
    \cmidrule(lr){2-3}
    system & \multicolumn{1}{c}{single-thread} & \multicolumn{1}{c}{multi-thread} \\
    \midrule
    \sysname   & $\bench{sha.flock.st}\text{k}_{\,(2^{\bench{sha.flock.st.exp}})}$ & $\bench{sha.flock.mt}\text{k}_{\,(2^{\bench{sha.flock.mt.exp}})}$ \\
    Binius64     & $\bench{sha.b64.st}\text{k}_{\,(2^{\bench{sha.b64.st.exp}})}$  & $\bench{sha.b64.mt}\text{k}_{\,(2^{\bench{sha.b64.mt.exp}})}$ \\
    Vega-MC  & $\bench{sha.nn.st}\text{k}_{\,(2^{\bench{sha.nn.st.exp}})}$  & $\bench{sha.nn.mt}\text{k}_{\,(2^{\bench{sha.nn.mt.exp}})}$ \\
    \midrule
    Native & \multicolumn{1}{c}{$\bench{sha.native}\text{M}$} \\
    \bottomrule
  \end{tabular}
    \caption{Proving vs.\ native computing throughputs for SHA-256 compressions.
    Subscripts indicate the approximate number of compressions at which proving throughput is maximized.}
  \label{tab:sha256-throughput}
\end{table}

\paragraph{BLAKE3.}
\cref{tab:blake3-throughput} reports maximum proving throughputs for BLAKE3 compressions.
For single-threaded performance, \sysname reaches a maximum throughput of $\bench{bl.flock.st}$k compressions per second; this represents a $\bench{overhead.bl} \times$ overhead over native execution.
For multi-threaded performance, \sysname reaches a maximum throughput of $\bench{bl.flock.mt}$k compressions per second, which is roughly $\bench{sp.bl.mt.head} \times$ faster than Binius64 and Plonky3.

S-two\footnote{Available at \url{https://github.com/starkware-libs/stwo}}\cite{STWO} has benchmarks only for BLAKE2s and single-threaded. On the same machine it proves $\bench{stwo.bl2s.st}\text{k}$ BLAKE2s compressions per second. Since BLAKE3 has only 7 rounds as compared to BLAKE2s which has 10, we extrapolate that S-two should be able to prove about $\bench{stwo.bl3.st}\text{k}$ BLAKE3 compressions per second -- about $\bench{sp.bl.st.stwo}\times$ slower than Flock.

\begin{table}[t]
  \centering
  \begin{tabular}{lrr}
    \toprule
    & \multicolumn{2}{c}{max throughput \textbf{blake3/s}} \\
    \cmidrule(lr){2-3}
    system & \multicolumn{1}{c}{single-thread} & \multicolumn{1}{c}{multi-thread} \\
    \midrule
    \sysname   & $\bench{bl.flock.st}\text{k}_{\,(2^{\bench{bl.flock.st.exp}})}$ & $\bench{bl.flock.mt}\text{k}_{\,(2^{\bench{bl.flock.mt.exp}})}$ \\
    Binius64     & $\bench{bl.b64.st}\text{k}_{\,(2^{\bench{bl.b64.st.exp}})}$  & $\bench{bl.b64.mt}\text{k}_{\,(2^{\bench{bl.b64.mt.exp}})}$ \\
    Plonky3      & $\bench{bl.p3.st}\text{k}_{\,(2^{\bench{bl.p3.st.exp}})}$  & $\bench{bl.p3.mt}\text{k}_{\,(2^{\bench{bl.p3.mt.exp}})}$ \\
    \midrule
    Native & \multicolumn{1}{c}{$\bench{bl.native}\text{M}$} \\
    \bottomrule
  \end{tabular}
    \caption{Proving vs.\ native computing throughputs for BLAKE3 compressions.
    Subscripts indicate the approximate number of compressions at which proving throughput is maximized.}
  \label{tab:blake3-throughput}
\end{table}

\subsection{Additional benchmarks}

\Cref{fig:blake3-scaling} plots \sysname's BLAKE3 proving throughput at different batch sizes, both single-threaded and multi-threaded.
Single-threaded throughput saturates at $2^{\bench{bl.flock.st.exp}}$ compressions, whereas multi-threaded throughput continues to rise, roughly approaching an $\bench{scaling.speedup.max} \times$ speedup.

\begin{figure}[t]
  \centering
  \begin{tikzpicture}
    \begin{axis}[
      width=0.9\columnwidth,
      height=0.66\columnwidth,
      xlabel={batch size (\#BLAKE3)},
      xtick={10,12,14,16,18},
      xticklabels={$2^{10}$,$2^{12}$,$2^{14}$,$2^{16}$,$2^{18}$},
      xmin=9.3, xmax=18.7,
      axis y line*=left,
      ymin=0, ymax=\bench{scaling.ymax.st},
      ylabel={single-thread (k BLAKE3/s)},
      grid=major,
    ]
      \addplot[very thick, mark=*, mark size=2pt, color=blue!70!black]
        coordinates {\blakeScalingST};
    \end{axis}
    \begin{axis}[
      width=0.9\columnwidth,
      height=0.66\columnwidth,
      xmin=9.3, xmax=18.7,
      hide x axis,
      axis y line*=right,
      ymin=0, ymax=\bench{scaling.ymax.mt},
      ylabel={multi-thread (k BLAKE3/s)},
    ]
      \addplot[very thick, mark=square*, mark size=2pt, color=red!70!black]
        coordinates {\blakeScalingMT};
      \blakeScalingLabels
    \end{axis}
  \end{tikzpicture}
  \caption{\sysname BLAKE3 proving throughput vs.\ batch size. Blue (left
    axis) is single-threaded; red (right axis) is multi-threaded (10 threads). Each multi-threaded point is labeled with its speedup over the
    single-threaded prover at the same batch size.}
  \label{fig:blake3-scaling}
\end{figure}

\cref{tab:keccak-fixed} fixes the batch size to be $\approx 2^{\bench{kcfix.exp}}$ Keccak permutations and reports the resulting prover throughput, proof size, and verifier time.
We include three variants of \sysname: \emph{fast} (the default used elsewhere, PCS rate $1/2$, list-decoding regime), \emph{slim} (rate $1/4$, which trades prover throughput for smaller proofs), and \emph{120-security} (rate $1/2$ in the unique-decoding regime, targeting $120$-bit rather than $100$-bit security).
The exact number of permutations ($1.5 \cdot 2^{\bench{kcfix.exp}}$ for \sysname and Hashcaster, $2^{\bench{kcfix.exp}}$ for Binius64, and $1.33 \cdot 2^{\bench{kcfix.exp}}$ for Plonky3) depends on padding.
\sysname leads on throughput, produces the smallest proofs ($\bench{kcfix.flock.fast.proof}$\,KiB for the fast variant, $\bench{kcfix.flock.slim.proof}$\,KiB for the slim variant), and has the fastest verifier (around $\bench{kcfix.flock.verify.approx}$\,ms, single-threaded, across all variants).

\begin{table}[!ht]
  \centering
  \begin{tabular}{lrrrr}
    \toprule
    system & m/t throughput & proof & verify \\
    \midrule
    \sysname (fast)   & $\bench{kcfix.flock.fast.thru}$k/s & $\bench{kcfix.flock.fast.proof}$\,KiB  & $\bench{kcfix.flock.fast.verify}$\,ms \\
    \sysname (slim)   & $\bench{kcfix.flock.slim.thru}$k/s & $\bench{kcfix.flock.slim.proof}$\,KiB  & $\bench{kcfix.flock.slim.verify}$\,ms \\
    \sysname (120-security) & $\bench{kcfix.flock.sec.thru}$k/s & $\bench{kcfix.flock.sec.proof}$\,KiB  & $\bench{kcfix.flock.sec.verify}$\,ms \\
    Hashcaster   & $\bench{kcfix.hc.thru}$k/s & $\bench{kcfix.hc.proof}$\,KiB  & $\bench{kcfix.hc.verify}$\,ms \\
    Binius64     & $\bench{kcfix.b64.thru}$k/s & $\bench{kcfix.b64.proof}$\,KiB  & $\bench{kcfix.b64.verify}$\,ms \\
    Plonky3      & $\bench{kcfix.p3.thru}$k/s & $\bench{kcfix.p3.proof}$\,MiB & $\bench{kcfix.p3.verify}$\,ms \\
    \bottomrule
  \end{tabular}
  \caption{Throughputs, proof sizes, and verifier times for proving $\approx 2^{\bench{kcfix.exp}}$ Keccak-f[1600] permutations.}
  \label{tab:keccak-fixed}
\end{table}

\cref{tab:phase-breakdown} reports \sysname's cost breakdowns for proving $\approx 2^{\bench{kcfix.exp}}$ Keccak-f[1600] permutations, SHA-256 compressions, and BLAKE3 compressions, as a percentage of overall prover time.
The shape is the same across all three hash function primitives.
The PCS and zerocheck phases dominate, accounting for $\bench{phase.pcszero.lo}$--$\bench{phase.pcszero.hi}\%$ of prover time, while witness generation and lincheck are comparatively cheap.

\begin{table}[!ht]
  \centering
  \begin{tabular}{lrrr}
    \toprule
    phase & Keccak & SHA-256 & BLAKE3 \\
    \midrule
    witness gen   & $\bench{phase.kc.witness_gen}\%$  & $\bench{phase.sha.witness_gen}\%$  & $\bench{phase.bl.witness_gen}\%$  \\
    PCS commit    & $\bench{phase.kc.pcs_commit}\%$ & $\bench{phase.sha.pcs_commit}\%$ & $\bench{phase.bl.pcs_commit}\%$ \\
    zerocheck     & $\bench{phase.kc.zerocheck}\%$ & $\bench{phase.sha.zerocheck}\%$ & $\bench{phase.bl.zerocheck}\%$ \\
    lincheck      & $\bench{phase.kc.lincheck}\%$  & $\bench{phase.sha.lincheck}\%$  & $\bench{phase.bl.lincheck}\%$  \\
    PCS open      & $\bench{phase.kc.pcs_open}\%$ & $\bench{phase.sha.pcs_open}\%$ & $\bench{phase.bl.pcs_open}\%$ \\
    \bottomrule
  \end{tabular}
    \caption{\sysname's cost breakdown for proving
    $\approx 2^{\bench{kcfix.exp}}$ hash function primitives.}
  \label{tab:phase-breakdown}
\end{table}
\FloatBarrier

\ifsubmission\else
  \iffull
    \section*{Acknowledgments}
  \else
    \ifCLASSOPTIONcompsoc
      \section*{Acknowledgments}
    \else
      \section*{Acknowledgment}
    \fi
  \fi
  We thank Eli Ben-Sasson, Tamir Hemo, Lev Soukhanov, and Srinath Setty for useful discussions.
\fi

\bibliographystyle{alpha}
\bibliography{refs}

\iffull
  \appendix
\else
  \appendices
\fi
\crefalias{section}{appendix}
\crefalias{subsection}{appendix}
\crefalias{subsubsection}{appendix}
\crefname{appendix}{Appendix}{Appendices}
\Crefname{appendix}{Appendix}{Appendices}

\newcommand{\SkipParam}{k_{\mathrm{s}}}

\section{Full protocol}
\label{app:full-protocol}

We describe the \sysname protocol with the optimizations discussed in \cref{sec:impl}.
We focus on describing the verifier, since only its behavior can affect soundness.

\paragraph{Setup.}
Let $\F = \Fbig$.
Fix base matrices $A_0, B_0 \in \F_2^{2^{m_{0}} \times 2^{m_{0}}}$, and let $K = 2^k$ be the number of instances.
We assume $m_0 \geq 6$.
The batched matrices are the block-diagonal
\begin{equation*}
  A = I_K \otimes A_0 \enspace, \quad B = I_K \otimes B_0
\end{equation*}
square matrices in $\F_2^{2^{m} \times 2^{m}}$ where $m \coloneqq k + m_0$.

Recall that the prover's goal is to prove, for some assignment $\mathbf{z} \in \F^{2^{m}}$, that $(A\mathbf{z}) \circ (B\mathbf{z}) = C\mathbf{z}$ and moreover $G(\mathbf{z}) = 0$ for some auxiliary input/output circuit $G$.

\paragraph{Quirky extensions.}
Fix $\SkipParam = 6$ and a subset $S \subseteq \F$ of size $2^{\SkipParam}$.\footnote{The specific choice of $\SkipParam$ is flexible, but we pick $6$ for efficiency.}
Set $D_{0} \coloneqq \bitset^{m_{0} - \SkipParam} \times S$ and $D \coloneqq \bitset^{m - \SkipParam} \times S$. We view $\mathbf{z}, \mathbf{a}, \mathbf{b}$ as functions $z, a, b \colon D \to \F_2$.
We write $\hat{z} \colon \F^{m - \SkipParam + 1} \to \F$ to denote the \emph{quirky extension}\footnote{These types of extensions have different names in the literature, e.g., \emph{oblong extensions} \cite{Irr} and \emph{prismalinear extensions} \cite{SWIRL}.} of $z$ over $\F$, namely, the unique multivariate polynomial that is linear in the first $m - \SkipParam$ variables, is of degree less than $2^{\SkipParam}$ in the last variable, and agrees with $z$ over $D$.
We also write $\hat{a}, \hat{b}$ to denote the quirky extensions of $a, b$ over $\F$.

Similarly, we view $A_{0}, B_{0}$ as functions mapping from $D_{0} \times D_{0}$ to $\F$, and we write $\hat{A}_{0}, \hat{B}_{0}$ to denote their quirky extensions over $\F$.

\paragraph{Quirky PCS.}
We rely on a Boolean \emph{quirky} PCS, which supports committing to quirky extensions of Boolean-valued functions $f: \bitset^{m} \to \bitset$.

\paragraph{Protocol.} The full protocol proceeds as follows.

\begin{enumerate}
\item \textbf{Commit to the trace.} The prover commits to $\hat{z}$ using a Boolean quirky PCS (without out-of-domain sampling, see \cref{remark:no-ood}), producing a commitment transcript $\mathsf{cm}$.

\item \textbf{Zerocheck.}
The initial claim is that
\begin{equation*}
  a(\mathbf{x}, y) \cdot b(\mathbf{x}, y) - z(\mathbf{x}, y) = 0 \quad\text{for all } (\mathbf{x}, y) \in D \enspace.
\end{equation*}

\emph{Friendly challenges} (\cref{sec:impl:friendly}). The verifier chooses $\mathbf{r} \in \F^{m - \SkipParam}$ so that the first $7$ coordinates are fixed constants (which are $\F_2$-linearly independent, see \cref{sec:impl:friendly}), and the remaining coordinates are randomly sampled.
The resulting claim is that
\begin{equation*}
  \sum_{\mathbf{x} \in \bitset^{m-\SkipParam}} \eq(\mathbf{r}, \mathbf{x}) \cdot \big(a(\mathbf{x}, y) \cdot b(\mathbf{x}, y) - z(\mathbf{x}, y)\big) = 0 \quad \text{for all } y \in S \enspace.
\end{equation*}

\emph{Univariate skip} (\cref{sec:impl:uskip}). The prover sends polynomials $P^{ab}, P^{z}$ claimed to be
\begin{align*}
  P^{ab}(Y) & = \sum_{\mathbf{x} \in \bitset^{m-\SkipParam}} \eq(\mathbf{r}, \mathbf{x}) \cdot \hat{a}_\mathbf{x}(Y) \cdot \hat{b}_\mathbf{x}(Y) \enspace, \\
  P^{z}(Y) & = \sum_{\mathbf{x} \in \bitset^{m-\SkipParam}} \eq(\mathbf{r}, \mathbf{x}) \cdot \hat{z}_\mathbf{x}(Y) = \hat{z}(\mathbf{r}, Y) \enspace,
\end{align*}
where $\hat{a}_\mathbf{x}, \hat{b}_\mathbf{x}, \hat{z}_{\mathbf{x}}$ are the low-degree univariate extensions of the functions $a(\mathbf{x}, \cdot), b(\mathbf{x}, \cdot), z(\mathbf{x}, \cdot)$, respectively.
Observe that the degree of $P^{ab}$ is at most $2 \cdot (2^{\SkipParam} - 1)$, and the degree of $P^{z}$ is at most $2^{\SkipParam} - 1$. Since $P^{ab} - P^{z}$ is supposed to vanish on $S$, it suffices for the prover to specify $P^{ab}, P^{z}$ by sending their evaluations on a disjoint subset $\Lambda \subset \F \backslash S$ of size $2^{\SkipParam}$.
The verifier samples $\lambda \gets \F$.
The resulting claims are that
\begin{align}
  P^{ab}(\lambda) & = \sum_{\mathbf{x} \in \bitset^{m-\SkipParam}} \eq(\mathbf{r}, \mathbf{x}) \cdot \hat{a}_\mathbf{x}(\lambda) \cdot \hat{b}_\mathbf{x}(\lambda) \enspace, \label{eq:full-protocol-zerocheck} \\
  P^{z}(\lambda) & = \hat{z}(\mathbf{r}, \lambda) \enspace, \label{eq:full-protocol-c-check}
\end{align}
where $P^{ab}(\lambda)$ and $P^{z}(\lambda)$ are computed by the verifier.
Observe that \cref{eq:full-protocol-c-check} is an evaluation claim $v^{\dagger} \coloneqq P^{z}(\lambda)$ for $\hat{z}$ at the point $\mathbf{r}^{\dagger} \coloneqq (\mathbf{r}, \lambda)$; this will be handled directly in Step \ref{item:full-protocol-pcs-open}.

For the claim in \cref{eq:full-protocol-zerocheck}, the prover and verifier run an $(m-\SkipParam)$-round sumcheck, resulting in a random point $\mathbf{s}' \in \F^{m - \SkipParam}$ along with evaluation claims
\begin{equation*}
  v_{a} \overset{?}{=} \hat{a}(\mathbf{s}) \enspace, \quad v_{b} \overset{?}{=} \hat{b}(\mathbf{s}) \quad \text{ for } \mathbf{s} \coloneqq (\mathbf{s}', \lambda) \enspace.
\end{equation*}
We split $\mathbf{s} = (\mathbf{s}_{\mathrm{out}}, \mathbf{s}_{\mathrm{in}})$; note that $\mathbf{s}_{\mathrm{out}} \in \F^k$ does not contain $\lambda$ (since $\SkipParam \leq m_{0}$).

\item \textbf{Lincheck.} We reduce the claims $v_a, v_b$ to a single claim on $\hat{z}$.
We describe the reduction for $\hat{a}$ first.
The claim is that
\begin{equation*}
  v_{a} \overset{?}{=} \hat{a}(\mathbf{s}) = \sum_{(\mathbf{j}, y) \in D_{0}} \hat{A}_0(\mathbf{s}_{\mathrm{in}}, \mathbf{j}, y) \cdot \hat{z}(\mathbf{s}_{\mathrm{out}}, \mathbf{j}, y) \enspace.
\end{equation*}
The prover and verifier run an $(m_0 - \SkipParam + 1)$-round sumcheck on this claim, resulting in a random point $\mathbf{t} \in \F^{m_0 - \SkipParam + 1}$ along with evaluation claims for $\hat{A}_0(\mathbf{s}_{\mathrm{in}}, \mathbf{t})$ and $\hat{z}(\mathbf{s}_{\mathrm{out}}, \mathbf{t})$.

The same applies with $B_0$ in place of $A_0$, and batching the linchecks via \cref{rem:batch-sumcheck} reduces to a single claim
\begin{equation*}
  v^{\ddagger} \overset{?}{=} \hat{z}(\mathbf{r}^{\ddagger}) \quad \text{ for } \mathbf{r}^{\ddagger} \coloneqq (\mathbf{s}_{\mathrm{out}}, \mathbf{t}) \enspace,
\end{equation*}
together with evaluation claims on $\hat{A}_0$ and $\hat{B}_0$ at $(\mathbf{s}_{\mathrm{in}}, \mathbf{t})$ that the verifier directly checks.

\item \textbf{Consistency with auxiliary circuit.} The prover and verifier run an auxiliary protocol, which reduces the claim that $G(\mathbf{z}) = 0$ into the claim that $\hat{z}(\mathbf{r}^{\star}) = v^{\star}$ (both $\mathbf{r}^{\star}$ and $v^{\star}$ are known to the verifier).

\item \label{item:full-protocol-pcs-open} \textbf{Open the trace.} The prover and verifier run the PCS opening phase on $\mathsf{cm}$ with the evaluation claims $(\mathbf{r}^{\dagger}, v^{\dagger})$, $(\mathbf{r}^{\ddagger}, v^{\ddagger})$, and $(\mathbf{r}^{\star}, v^{\star})$. The verifier accepts if and only if the PCS verifier accepts and every check above passes.
\end{enumerate}

\paragraph{Security.}
Let $L_{\max}$ be the upper bound on the list size from~\cref{eq:johnson-bound} (or $L_{\max} = 1$ for the unique decoding regime).
The round-by-round soundness errors are:
\begin{itemize}
  \item \textbf{Commit to the trace:} $0$, since there is no out-of-domain sample.
  \item \textbf{Zerocheck:} The initial round's soundness error is $L_{\max} \cdot \frac{m - \SkipParam - 7}{|\F|}$, the univariate skip round's soundness error is $L_{\max} \cdot \frac{2 \cdot (2^{\SkipParam} - 1)}{|\F|}$, and the remaining rounds have soundness error $L_{\max} \cdot \frac{2}{|\F|}$.
  \item \textbf{Lincheck:} The univariate skip round's soundness error is $L_{\max} \cdot \frac{2 \cdot (2^{\SkipParam} - 1)}{|\F|}$, and the remaining rounds have soundness error $L_{\max} \cdot \frac{2}{|\F|}$.
  \item \textbf{Consistency with auxiliary circuit:} according to the auxiliary protocol, times $L_{\max}$.
  \item \textbf{Open the trace:} discussed in Appendix~\ref{appx:ligerito}.
\end{itemize}

\newcommand{\hi}{\mathrm{hi}}
\newcommand{\lo}{\mathrm{lo}}
\newcommand{\pkd}{\mathrm{pkd}}
\newcommand{\rA}{\mathcal{A}}        
\newcommand{\vph}{\varphi}

\section{Overview of ring-switching}\label{appx:ring_switch}
In multilinear proof-systems operating over binary fields, such as \sysname, it is convenient to arithmetize while assuming access to the underlying bits of the witness. In contrast, the underlying commitment schemes are typically designed to commit to data represented over a moderately large field.

The \emph{ring-switching} technique, introduced by Diamond and Posen \cite{biniusml} (building also on \cite{binius}), converts a polynomial commitment scheme designed to work over a large extension field into one that works over the base field, and in particular over bits. In this section we give an overview of their technique, while replacing one of their steps with a more modular, and arguably simpler, approach.

For sake of simplicity we restrict our attention to the $128$-bit extension field $\F_{2^{128}}$ but note that the same approach can be generalized to any power-of-two extension.

\subsection{Setup}
We use $\F_2$ to denote the two-element binary field and $\F = \F_{2^{128}}$ to denote its degree-$128$ extension. Recall that $\F$ is a $128$-dimensional vector space over $\F_2$. Let $\mathbf{b} = \{b_v\}_{v \in \bitset^7}$ denote a basis for $\F$ over $\F_2$; we view the $b_v$'s as elements of $\F$.

Let $q : \bitset^m \to \bitset$ be a Boolean-valued function. Let $q_{\pkd} : \bitset^{m-7} \to \F$ denote the packing of $q$ relative to $\mathbf{b}$: for $y \in \bitset^{m-7}$,
\[
  q_{\pkd}(y) = \sum_{v \in \bitset^7} q(y,v) \cdot b_v .
\]
Note that $q$ and $q_{\pkd}$ carry \emph{the same information}: $q_{\pkd}$ replaces each $128$-bit chunk of $q$'s truth table by the single $\F$-element it encodes. Hence committing to $q_{\pkd}$ via a large-field scheme incurs no ``embedding overhead''.

Let $\hat{q} : \F^m \to \F$ and $\hat{q}_{\pkd} : \F^{m-7} \to \F$ denote the multilinear extensions of $q$ and $q_{\pkd}$, respectively. The ring-switch protocol is an interactive reduction from a multilinear evaluation claim $\hat{q}(r)=\alpha$, for $r \in \F^m$ and $\alpha \in \F$, to a multilinear evaluation claim on the packed polynomial $\hat{q}_{\pkd}(r')=\alpha'$, for some $r' \in \F^{m-7}$ and $\alpha' \in \F$.

\subsection{The ring-switch protocol}

\paragraph{Decomposing the claim.}
Decompose $r \in \F^m$ as $r = (r_{\hi}, r_{\lo}) \in \F^{m-7} \times \F^7$, where $r_\lo$ corresponds to the $7$ packed coordinates. Observe that,
\begin{align}\label{eq:partial}
  \alpha \;=\; \hat q(r_\hi, r_\lo) \;=\; \sum_{v \in \bitset^7} \eq(r_{\lo}, v)\cdot \hat{q}(r_\hi, v).
\end{align}
For $v \in \bitset^7$, define the \emph{partial evaluation}:
\[
  s_v \;:=\; \hat q(r_\hi, v) \;\in\; \F.
\]
The prover generates and sends to the verifier the sequence $(s_v)_{v \in \bitset^7}$. It therefore suffices for the prover to convince the verifier that the values $(s_v)_{v \in \bitset^7}$ are correct; assuming they are, the verifier can check that $\alpha \stackrel{?}{=} \sum_v \eq(r_\lo, v) \cdot s_v$ via \cref{eq:partial}. Thus, the key task is certifying the $s_v$'s.

\begin{remark}[Why Naive Recombination is Insecure]
It is tempting to try to certify the $s_v$'s in one shot. By definition of packing,
\[
  \sum_{v \in \bitset^7} s_v\, b_v \;=\; \sum_{v} \hat{q}(r_\hi, v) \cdot b_v \;=\; \hat q_{\pkd}(r_\hi),
\]
so the verifier could obtain $\hat q_{\pkd}(r_\hi)$ from one opening of the committed packed polynomial and simply compare. This is complete but \emph{not sound}.

In more detail, the basis $\{b_v\}$ is linearly independent over $\F_2$, but \emph{not} over $\F$. Writing $\delta_v := s_v - \hat q(r_\hi,v) \in \F$ for the prover's errors, the check above amounts to the single $\F$-equation $\sum_v \delta_v\, b_v = 0$, which has an enormous space of nonzero solutions $\delta \in \F^{128}$. Adjoining the verifier's other linear check $\sum_v \delta_v \eq(r_\lo,v)=0$ leaves two $\F$-equations constraining $128$ unknowns in $\F$. A cheating prover has ample room to send false $s_v$ that pass. Abstractly, the recombination map $\sum_v s_v\, b_v$ is the multiplication map $h:\F\otimes_{\F_2}\F \to \F$, which is not injective.
\end{remark}

\paragraph{The fix: descend to $\F_2$, recombine slice-wise.}
At a high-level, the fix is to perform the linear combination \emph{where the basis is actually independent} --- over $\F_2$ --- and only then lift back to $\F$.

As $\mathbf{b}$ is a basis, we can decompose each partial evaluation $s_v$ over $\F_2$ as $s_v = \sum_{u \in \bitset^7} s_{u,v} \cdot b_u$ for $s_{u,v} \in \F_2$. Likewise decompose the equality weights: there exists a function $A : \bitset^{m-7} \times \bitset^7 \to \F_2$ such that for each $y \in \bitset^{m-7}$:
\begin{align}\label{eq:Adef}
  \eq(r_\hi, y) = \sum_{u \in \bitset^7} A(y,u) \cdot b_u.
\end{align}
Substituting \cref{eq:Adef} into the definition of $s_v$
we have that $s_v = \sum_{u \in \bitset^7} b_u \cdot \left( \sum_y  A(y,u) \cdot q(y,v) \right)$. As the inner sum is strictly over $\F_2$, and the $b_u$'s are linearly independent, each one of the summands must match the corresponding $s_{u,v}$. Thus, we have that for all $u,v \in \bitset^7$,
\begin{align}\label{eq:F2claims}
  s_{u,v} = \sum_{y \in \bitset^{m-7}} A(y,u) \cdot q(y,v).
\end{align}
These claims are \emph{purely over $\F_2$}. We may therefore safely recombine them \emph{over $v$} using the basis $\mathbf{b}$. Setting $s_u := \sum_{v} s_{u,v}\, b_v$ we have,
\begin{align}\label{eq:rowclaims}
  s_u
  \;=\; \sum_v \Big(\sum_y A(y,u)\, q(y,v)\Big) \cdot b_v
  \;=\; \sum_{y} A(y,u) \sum_v q(y,v) \cdot b_v
  \;=\; \sum_{y \in \bitset^{m-7}} A(y,u) \cdot q_{\pkd}(y),
\end{align}
one claim for each $u \in \bitset^7$. The right-hand side depends only on the packed polynomial $q_{\pkd}$ and on the publicly determined coefficients $A(y,u)$.

To see that soundness holds, assume one of the $s_v$'s sent by the prover was incorrect. By the uniqueness of the $\F_2$-decomposition, $s_{u,v}$ is then incorrect for at least one $u \in \bitset^7$. For this $u$, the recombined value $s_u = \sum_v s_{u,v} \cdot b_v$ must also be incorrect: by the $\F_2$-linear independence of the $b_v$'s, the recombination is injective in the $\F_2$-coefficients $(s_{u,v})_v$.

\subsubsection{Batching and the Sumcheck}
There are $128$ claims captured by \cref{eq:rowclaims}. The verifier batches them: it samples $r'' \leftarrow \F^7$ and computes by itself $\beta_0 = \sum_{u \in \bitset^7} \eq(r'',u) \cdot s_u$. What is left is to check that
\[
  \sum_{y \in \bitset^{m-7}} B(y)\cdot q_{\pkd}(y) = \beta_0,
\]
where $B(y) := \sum_{u \in \bitset^7} \eq(r'',u) \cdot A(y,u)$.\footnote{This batching incurs soundness error $7/|\F|$ by Schwartz--Zippel, which is negligible.}

The prover and verifier now run the standard sumcheck on $\sum_{y \in \bitset^{m-7}} B(y) \cdot \hat q_{\pkd}(y)$. The result of the sumcheck is a claim on $\hat{B}(r')$ and one on $\hat{q}_{\pkd}(r')$ for some $r' \in \F^{m-7}$. The claim on $\hat{q}_{\pkd}(r')$ is the output of the ring-switching reduction. As for the claim on $\hat B(r')$, we next show that the verifier can compute it by itself.


\paragraph{Evaluating $\hat{B}$.}

Diamond and Posen~\cite{biniusml} give a direct way to evaluate $\hat{B}$ via the perspective of tensor algebras. We give here a slightly more modular approach, which we find to be conceptually simpler.

To show how the verifier can compute $\hat{B}$, we start by giving a closed form expression for $B$. Let $\Phi : \F \to \F$ be the $\F_2$-linear map determined by $b_u \mapsto \eq(r'', u)$ on the basis $\mathbf{b}$.
\begin{claim}
  For all $y \in \bitset^{m-7}$:
  \[  B(y) = \Phi\big(\eq(r_\hi, y)\big). \]
\end{claim}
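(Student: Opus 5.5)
The plan is to unfold both sides of the identity in the basis $\mathbf{b}$ and use only $\F_2$-linearity of $\Phi$. We fix $y \in \bitset^{m-7}$ and start from \cref{eq:Adef}, which writes $\eq(r_\hi, y) = \sum_{u \in \bitset^7} A(y,u)\cdot b_u$ with $A(y,u) \in \F_2$. Since the $b_u$ form an $\F_2$-basis of $\F$, these coefficients are uniquely determined. This is the only fact about $A$ we need.

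Next we apply $\Phi$ to both sides. The map $\Phi$ is $\F_2$-linear, and the coefficients $A(y,u)$ lie in $\F_2$, so they pass through $\Phi$:
\[
  \Phi\big(\eq(r_\hi,y)\big) = \sum_{u \in \bitset^7} A(y,u)\cdot \Phi(b_u) = \sum_{u \in \bitset^7} A(y,u)\cdot \eq(r'',u).
\]
Here we used the defining property $\Phi(b_u) = \eq(r'',u)$. The right-hand side is exactly the definition of $B(y)$, which gives the claim.

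The one point to state explicitly is that $\Phi$ is only $\F_2$-linear, not $\F$-linear. This is harmless because each coefficient $A(y,u)$ is a bit, so pulling it out of $\Phi$ uses only additivity. I would add a brief remark that $\Phi$ is well defined, being the unique $\F_2$-linear extension of its values on a basis.

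There is no real obstacle in this step. The substantive difficulty comes afterwards: passing from the closed form of $B$ on $\bitset^{m-7}$ to an efficient formula for $\hat{B}(r')$, which will presumably combine $\F_2$-linearity of $\Phi$ with multilinear extension.
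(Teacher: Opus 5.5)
Your proof is correct and is essentially the paper's argument: expand $\eq(r_\hi,y)=\sum_{u} A(y,u)\cdot b_u$, pull the $\F_2$-valued coefficients $A(y,u)$ through the $\F_2$-linear map $\Phi$, and use $\Phi(b_u)=\eq(r'',u)$ to recover the definition of $B(y)$. Your remark that only additivity of $\Phi$ is needed, because each $A(y,u)$ is a bit, is the same justification the paper gives for its first equality.
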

\begin{proof}
\[
  \Phi\big(\eq(r_\hi, y)\big)
  \;=\; \sum_{u \in \bitset^7} A(y, u) \cdot \Phi(b_u)
  \;=\; \sum_{u \in \bitset^7} A(y, u) \cdot \eq(r'', u)
  \;=\; B(y),
\]
where the first equality is by the $\F_2$ linearity of $\Phi$ and since $A$ takes values in $\F_2$.
\end{proof}

The equality polynomial factors over $\F$ as
\[
  \eq(r_\hi, y) \;=\; \prod_{i} g_i(y_i),
\]
where $g_i : \bitset \to \F$ is defined as $g_i(0) = 1-r_{\hi,i}$ and $g_i(1)=r_{\hi,i}$. Recall that multiplication by a fixed element of $\F$ is an $\F_2$-linear map on $\F \cong \F_2^{128}$ (via the basis $\mathbf{b}$), and so can be expressed as a $128 \times 128$ matrix over $\F_2$. Hence, for each $i$ there are two such matrices, one for each value of $y_i \in \bitset$. Thus, $B(y) = \Phi(\eq(r_\hi, y))$, evaluated on the Boolean hypercube, can be expressed as a width-$128$, length-$(m-7)$ \emph{matrix branching program} in $y$: the program reads the bits of $y$ in sequence and multiplies a $128$-bit vector by the matrix selected at each layer by the corresponding bit. After reading the last bit it further applies the linear transformation $\Phi$.

As shown by Holmgren and Rothblum \cite{HR18} (cf. \cite[Lemma 4.1]{HJRRR26} and \cite{Bloemen25}), if a function can be computed by a small width (matrix) branching program then there is an efficient algorithm to compute its multilinear extension. This yields an efficient algorithm for evaluating $\hat{B}$, as desired. For completeness, we reproduce the \cite{HR18} result in \cref{app:mbp-mle}.

\subsection{Ring-switching a quirky claim}
\label{appx:ring_switch_quirky}

The reduction above assumes a \emph{multilinear} input claim $\hat q(r) = \alpha$. In \sysname, however --- as in Binius64 --- ring-switching is invoked on the zerocheck's output, which uses the univariate-skip optimization (\cref{sec:impl:uskip}). The committed witness is then a \emph{quirky extension} (see \cref{app:full-protocol}): linear in all but the last variable, in which it has degree $< 2^{\SkipParam}$. We now explain how to extend ring-switching to handle such a claim.

The structure of the packed coordinates enters the reduction at the decomposition step (see \cref{eq:partial}) and its corresponding verifier check $\alpha \stackrel{?}{=} \sum_v \eq(r_\lo, v)\, s_v$. For a quirky claim the partial evaluations $s_v = \hat q(r_\hi, v)$ are unchanged --- the suffix $r_\hi$ stays multilinear --- and the \emph{only} difference is that the weights $\eq(r_\lo, v)$ become a more general vector $w \in \F^{128}$, giving $\alpha = \sum_v w_v\, s_v$.

These weights have a simple closed form. Recall that the $7$ packed coordinates consist of the $\SkipParam = 6$ skipped coordinates together with one extra coordinate, so we may split the packed index as $v = (\sigma, b)$, where $\sigma \in \bitset^{\SkipParam}$ ranges over the $2^{\SkipParam}$ skip points and $b \in \bitset$ is the extra coordinate. The quirky claim reads the skipped coordinates at a univariate point $\zeta \in \F$ and the extra coordinate at a multilinear point $\rho \in \F$. The weight splits along this decomposition,
\[
  w_{(\sigma,b)} \;=\; L_\sigma(\zeta) \cdot \eq(\rho, b),
\]
where $L_\sigma$ is the Lagrange polynomial selecting the $\sigma$-th skip point (so that $L_\sigma(\zeta)$ evaluates the degree-$<2^{\SkipParam}$ univariate extension at $\zeta$), and $\eq(\rho, b)$ is the usual multilinear weight for the extra coordinate.

Every other ingredient of the reduction depends only on $r_\hi$ and fresh randomness, so we simply substitute $w_v$ for $\eq(r_\lo, v)$ and leave the rest untouched --- including soundness, which pins down each $s_v$ via the injective $\F_2$-recombination of \cref{eq:rowclaims}, regardless of $w$.

\subsection{Multilinear extension of matrix branching programs}
\label{app:mbp-mle}

Intuitively, a width $w$ matrix branching program maintains as its state a vector $s \in \F^w$. It reads its input from left-to-right and every input bit specifies a $w \times w$ linear transformation to apply to the state. At the end we take the inner product of the state with a ``sink'' vector to get the result.

More formally, a \emph{width-$w$ read-once matrix branching program (MBP)} over $\bitset^m$ specifies, for every layer $i \in [m]$ two transition matrices $M_i^{(0)},M_i^{(1)} \in \F^{w \times w}$, together with a source vector $u \in \F^w$ and sink vector $v \in \F^w$. The function $f : \bitset^m \to \F$ computed by the program is
\[
  f(x_1, \ldots, x_m) \;:=\; v^T \cdot M_m^{(x_m)} \cdot M_{m-1}^{(x_{m-1})} \cdots M_1^{(x_1)} \cdot u.
\]

As observed in \cite{HR18}, the multilinear extension of such a function can then be computed using the following identity.
\begin{equation}
\label{eq:mbp-mle}
  \hat{f}(r_1, \ldots, r_m) =
  v^T \cdot \prod_{i = m}^{1} \Bigl( (1 - r_i)\, M_i^{(0)} + r_i\, M_i^{(1)} \Bigr) \cdot u,
\end{equation}
for every $\mathbf{r} = (r_1,\dots,r_m) \in \F^m$. The identity follows by observing that by definition it holds for Boolean-valued inputs and both sides of the equation are multilinear in $\mathbf{r}$.

\section{Overview of Ligerito}\label{appx:ligerito}

\sysname uses Ligerito \cite{NA25}, instantiated with Reed--Solomon codes, as the underlying (dense) multilinear PCS. Ligerito is closely related to the WHIR PCS \cite{ArnonCFY25} but uses interleaved codes rather than Reed-Solomon codes (but the specific instantiation we focus on uses \emph{interleaved} Reed-Solomon codes).

This appendix gives a self-contained overview of Ligerito, presented as a recursive construction of an interactive oracle PCS (see \cref{sec:prelim:pcs}). Ligerito as described in \cite{NA25} works in the unique decoding regime. We give a straightforward extension to the list-decoding regime, which enables fewer queries (and hence smaller proofs).

\subsection{Commitment phase}
Our goal is to commit to the multilinear extension of a function $f \colon \bitset^{m} \to \F$, where $\F$ is a sufficiently large finite field. The PCS has two key parameters: the \emph{folding factor} $2^\ell$ and the \emph{rate} $\rho \in (0, 1)$.\footnote{Looking ahead, we remark that the construction of the PCS is recursive and these two parameters can---and will---be independently set in each level of recursion.}
We view $f$ as a $2^{\ell} \times 2^{m-\ell}$ dimensional matrix over $\F$.
Rows are indexed by $\bitset^{\ell}$, and for $\mathbf{i} \in \bitset^{\ell}$ we write $f_{\mathbf{i}}$ to denote the $\mathbf{i}$-th row of $f$.

Fix a Reed--Solomon code of rate $\rho$ over $\F$, which encodes messages of length $k \coloneqq 2^{m - \ell}$ to codewords of length $n \coloneqq 2^{m - \ell}/\rho$ via a linear map $\mathrm{Enc} \colon \F^{k} \to \F^{n}$.
The prover commits to $f$ by sending the \emph{interleaved} codeword $C$, which is the $2^\ell \times n$ matrix obtained by encoding each row of $f$, i.e., $C[\mathbf{i}, \cdot] \coloneqq \mathrm{Enc}(f_{\mathbf{i}})$ for each $\mathbf{i} \in \bitset^\ell$.

\paragraph{Binding in the list decoding regime.} How tightly this commitment binds the prover depends on the proximity radius $\gamma \in (0, 1)$ used during the opening phase, which directly translates to the number of queries issued by the verifier. In the \emph{unique decoding} regime, the proximity radius is such that the committed rows are jointly close to at most one interleaved codeword, so $C$ determines a single polynomial $\hat f$.

To reduce the query count we consider also the \emph{list decoding} regime, where $C$ is only guaranteed to be close to a small list of at most $L$ candidate codewords. To bind the prover to a single one, the verifier requests an \emph{out-of-domain} (OOD) \cite{DBLP:journals/corr/abs-1903-12243} evaluation: it samples a random point $\mathbf{z} \in \F^m$ and the prover answers with the claimed value $\hat{f}(\mathbf{z})$. Importantly, this is done immediately after the prover sends the interleaved codeword (i.e., in the commitment phase). Next, we show that, with high probability over the choice of $\mathbf{z}$, all of the candidate of the candidate codewords in the nearby list disagree on their evaluation on $\mathbf{z}$, and so this evaluation uniquely identifies one of them.

By the Schwartz-Zippel lemma, since $\mathbf{z}$ is chosen at random, the multilinears corresponding to any two distinct codewords in the list agree on $\mathbf{z}$ with probability at most $m/|\F|$. By union bounding over all pairs, the probability that there exist a pair of distinct codewords in the list whose underlying multilinears agree on $\mathbf{z}$ is at most $\binom{L}{2} \cdot \frac{m}{|\F|}$.

Thus, we need to bound the size of the list $L$. Here, we use the fact that an interleaved code inherits its distance from the underlying base code. This means that the interleaved Reed-Solomon code (which we used to encode $f$) has relative distance $1-\rho$. By the Johnson bound (see, e.g., \cite[Chapter~7]{GRS23}), for any slack parameter $\eta > 0$, at proximity radius $\gamma = 1-\sqrt{\rho}-\eta$ the committed word is close to a list of at most
\begin{equation} \label{eq:johnson-bound}
  L \leq \frac{1}{2\eta\sqrt{\rho}}
\end{equation}
codewords. Combining with the union bound, we find that the the out-of-domain evaluation fails to bind the prover with probability at most
\begin{equation*}
  \binom{L}{2}\cdot\frac{m}{|\F|} \leq \frac{1}{8\eta^2\rho}\cdot\frac{m}{|\F|} \enspace.
\end{equation*}
The slack parameter $\eta$ should be thought of as some small fixed constant (e.g., $\eta = 0.01$).

\subsection{Opening phase}
Our goal is to prove an evaluation of the form $\hat{f}(\mathbf{y}) = y$. However, to facilitate recursion (and also as it is more useful) we show how to prove more general \emph{linear evaluations} of the form
\begin{equation}\label{eq:lig-claim}
  v = \langle w, f \rangle \coloneqq \sum_{\mathbf{x} \in \bitset^m} w(\mathbf{x}) \cdot f(\mathbf{x}),
\end{equation}
where the weight $w \colon \bitset^m \to \F$ is ``MLE-friendly'': its multilinear extension $\hat w$ can be evaluated at any point in $\mathrm{poly}(m)$ time (a multilinear evaluation claim $\hat{f}(\mathbf{r})$ is the special case $w = \eq(\mathbf{y}, \cdot)$).

\paragraph{Batching claims.}
Besides the input evaluation claim $\langle w, f \rangle = v$, the verifier must also check the out-of-domain evaluation $\hat{f}(\mathbf{z}) = v'$, which can be written as $\langle w', f \rangle = v'$ for $w' = \eq(\mathbf{z},\cdot)$.
To do so, it batches them into a single evaluation claim $\langle w + \alpha \cdot w' ,f\rangle = v + \alpha \cdot v'$, where $\alpha \in \F$ is a random coefficient (this can be generalized to batch any number of evaluation claims).
Observe that if $w$ are $w'$ are MLE-friendly, then so is $w + \alpha \cdot w'$.
Thus, we can focus on proving a single evaluation claim below.

\paragraph{Sumcheck and folding.}
The prover and verifier start by running $\ell$ rounds of sumcheck on \cref{eq:lig-claim}'s claim.
The residual claim is
\begin{align}\label{eq:smaller-lig-claim}
  \sum_{\mathbf{x} \in \bitset^{m-\ell}} w'(\mathbf{x}) \cdot f'(\mathbf{x}) = v'',
\end{align}
where $v'' \in \F$ is held by the verifier and $f', w' \colon \bitset^{m - \ell} \to \F$ are obtained by partially evaluating (aka \emph{folding}) $\hat{f}, \hat{w}$ at a random point $\mathbf{s} \in \F^{\ell}$, i.e., $f'(\mathbf{x}') \coloneqq \hat{f}(\mathbf{s}, \mathbf{x})$ and $w'(\mathbf{x}') \coloneqq \hat{w}(\mathbf{s}, \mathbf{x}')$.
Observe that if $w$ is MLE-friendly, then so is $w'$.

\paragraph{Commit and recurse.} After the $\ell$ sumcheck rounds, the prover commits to $f'$ recursively, using a smaller instantiation of the IOPCS which supports $m - \ell$ variables. We can then run the smaller opening phase to prove \cref{eq:smaller-lig-claim}.

The problem is that nothing so far forces the prover to actually commit to $f'$, i.e., it may commit to an unrelated $g$ that happens to satisfy \cref{eq:smaller-lig-claim}. Hence, we must additionally check that the committed polynomial $g$ is indeed $f'$.

\paragraph{Consistency check.}
If $f'$ is not equal to $g$, then $\mathrm{Enc}(f')$ must be (coordinate-wise) far from $\mathrm{Enc}(g)$.
The verifier can therefore check consistency by sampling $t$ random coordinates $J = \{j_{1}, \dots, j_{t}\} \subset [n]$, and testing
\begin{equation*}
  \mathrm{Enc}(g)[j] = \mathrm{Enc}(f')[j] \quad \forall j \in J \enspace.
\end{equation*}
As $\mathrm{Enc}$ corresponds to a linear code, this is a set of $t$ linear evaluations of $g$; moreover, using the fact that it is a Reed--Solomon code, it can be shown that the corresponding weights are MLE-friendly~\cite{ArnonCFY25,NA25} (cf. \cite[Lemma 7.5]{BFRW25}). Thus, the prover may recursively prove these claims alongside the main claim $\langle g, w' \rangle = v''$ (via batching).

It remains to show that the verifier can obtain $\mathrm{Enc}(f')[j]$ for an arbitrary coordinate $j \in [n]$, from the original commitment matrix $C$. Let $G \in \F^{n \times k}$ be the generator matrix of the Reed--Solomon code, i.e., the matrix defining $\mathrm{Enc}(u) \coloneqq G u$, and let $G_{j}$ denote the $j$-th row of $G$.
We have the following identity for every coordinate $j$:
\begin{equation*}
  \mathrm{Enc}(f')[j] = \langle G_{j}, f' \rangle = \left\langle G_{j}, \sum_{\mathbf{i} \in \bitset^{\ell}} \eq(\mathbf{s}, \mathbf{i}) \cdot f_{i} \right\rangle = \sum_{\mathbf{i} \in \bitset^{\ell}} \eq(\mathbf{s}, \mathbf{i}) \cdot C[\mathbf{i}, j] \enspace.
\end{equation*}
Thus, the verifier can access the $j$-th coordinate of $\mathrm{Enc}(f')$ by reading the $j$-th column of $C$ and computing $\sum_{\mathbf{i}} \eq(\mathbf{s}, \mathbf{i}) \cdot C[\mathbf{i}, j]$ itself.\footnote{When compiling into a succinct argument, we arrange the columns of $C$ as leaves of a Merkle tree. Thus, opening a column only requires a single Merkle path.}

\subsection{Security} \label{sec:lig-johnson}
We sketch the opening phase's soundness analysis, with a focus on deriving precise round-by-round soundness errors of the protocol.

Let $\mathrm{RS}[\F, n, k]$ denote the Reed--Solomon code used in the PCS; its minimum distance is known to be $\delta = 1 - \rho$, where $\rho = k/n$ is the rate of the code.
Recall that a $2^{\ell}$-wise interleaved codeword is a matrix $U \in \F^{2^\ell \times n}$ where, for each $\mathbf{i} \in \bitset^{\ell}$, the row $U[\mathbf{i}, \cdot]$ is a codeword of $\mathrm{RS}[\F, n, k]$.
Let $C \in \F^{2^\ell \times n}$ be an interleaved word, and fix a proximity radius $\gamma \in (0, 1)$.
We say:
\begin{itemize}
  \item $C$ agrees with an interleaved codeword $U$ on a set of columns $A \subseteq [n]$ if, for every $\mathbf{i} \in \bitset^{\ell}$ and $j \in A$, it holds that $C[\mathbf{i}, j] = U[\mathbf{i}, j]$.
  \item $C$ agrees with the interleaved code $\mathrm{RS}[\F, n, k]^{2^{\ell}}$ on $A$ if there exists an interleaved codeword $U$ agreeing with $C$ on $A$.
  \item $C$ is $\gamma$-close to an interleaved codeword $U$ if they agree on some $A$ with $|A| \geq (1 - \gamma) \cdot n$.
\end{itemize}
We write $\Lambda_{\gamma}(C)$ to denote the list of interleaved codewords $U$ which are $\gamma$-close to $C$.
For a point $\mathbf{s}_{i} \coloneqq (s_{1}, \dots, s_{i}) \in \F^{i}$, we write $C_{\mathbf{s}_{i}}$ to denote the $\mathbf{s}_{i}$-folding of $C$, i.e., for $\mathbf{i} \in \bitset^{\ell-i}$, the entry $C_{\mathbf{s}_{i}}[\mathbf{i}, j] \coloneqq$ is defined to be $\hat{c}_{j}(\mathbf{s}_{i}, \mathbf{i})$, where $c_{j}$ denotes the $j$-th column of $C$.

\paragraph{Mutual correlated agreement.}
Our soundness analysis leverages \emph{mutual correlated agreement} (MCA) of linear codes~\cite{ArnonCFY25}.
In particular, we rely on the MCA analysis for Reed--Solomon codes from~\cite{BCHKS25}.

\begin{theorem}[MCA up to unique decoding, adapted from~{\cite[Corollary~1.4]{BCHKS25}}]\label{thm:ca-udr}
Let $\mathrm{RS}[\F, n, k]$ be a Reed--Solomon code with minimum distance $\delta = 1 - \frac{k}{n}$ satisfying $\delta \geq \frac{3 \sqrt{2}}{n}$.
Let $\gamma \in \left[\frac{\delta}{3}, \frac{\delta}{2} - \frac{3}{\delta n}\right]$. For any interleaved word $C \in \F^{2 \times n}$, it holds that
\begin{equation*}
  \Pr_{s \gets \F}\left[
    \exists A \subseteq [n], |A| \geq (1 - \gamma) \cdot n :
    \begin{array}{l}
      C_{s} \text{ agrees with } \mathrm{RS}[\F, n, k] \text{ on } A \\
      \land\; C \text{ does not agree with } \mathrm{RS}[\F, n, k]^{2} \text{ on } A
    \end{array}
  \right] \leq \frac{a}{|\F|} \enspace,
\end{equation*}
where $a = \gamma \cdot n + 1$.
\end{theorem}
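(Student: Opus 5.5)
This statement is imported from \cite[Corollary~1.4]{BCHKS25}, so the plan is a careful specialization of their result rather than a new argument.

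\textbf{Step 1: translate the objects.} A $2$-row interleaved word $C=(c_0,c_1)$ folded at $s$ is $C_s=(1-s)c_0+s c_1$. This is an affine line $\{c_0+s(c_1-c_0)\}$ in $\F^n$, reparametrized by the bijection $s\mapsto s$ on $\F$. So the event in the probability is exactly the failure of correlated agreement for the line $u_0+s u_1$, where $u_0=c_0$ and $u_1=c_1-c_0$. Failure means the line point is $\gamma$-close to $\mathrm{RS}[\F,n,k]$ on some set $A$ with $|A|\ge(1-\gamma)n$, but $u_0,u_1$ do not jointly agree with codewords on $A$.

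\textbf{Step 2: match the notions of agreement.} Joint agreement of $(u_0,u_1)$ on $A$ is equivalent to joint agreement of $(c_0,c_1)$ on $A$, because the map $(c_0,c_1)\mapsto(c_0,c_1-c_0)$ is invertible and linear and the code is linear. The statement is phrased per set $A$ (``agrees on $A$ but $C$ does not agree on $A$''), which is the mutual correlated agreement formulation. I would check that \cite{BCHKS25} Corollary~1.4 is stated in this same mutual form, i.e.\ in terms of the agreement set of the combination. If it is stated only as ``correlated agreement on some set of size $\ge(1-\gamma)n$'', I would instead use the unique-decoding argument. Below $\delta/2$, the close codeword to $C_s$ is unique. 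If the agreement set $A$ witnesses closeness but joint agreement fails on $A$, then two distinct such $s$ values yield codewords that interpolate a joint codeword pair agreeing on the intersection of their sets. A counting argument over at most $\gamma n+1$ ``bad'' $s$ then upgrades correlated agreement to mutual correlated agreement.

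\textbf{Step 3: check parameters.} I would confirm that $\gamma\in[\delta/3,\delta/2-3/(\delta n)]$ and $\delta\ge 3\sqrt2/n$ are exactly the hypotheses of their corollary, with $\delta=1-k/n$. The bound $a=\gamma n+1$ over $|\F|$ is their error term. Sampling $s$ uniformly from $\F$ matches their randomness over the line.

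The main obstacle is Step 2: making sure the cited result gives the \emph{mutual} version, where the agreement set of $C_s$ must itself be a joint agreement set. If it does not, the uniqueness-based upgrade sketched above is what I would write out in full.
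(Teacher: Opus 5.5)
Your plan matches the paper's, which gives no independent argument and simply imports the statement from \cite[Corollary~1.4]{BCHKS25}. Identifying $C_s=(1-s)c_0+sc_1$ with the line $c_0+s(c_1-c_0)$, and noting that the invertible map $(c_0,c_1)\mapsto(c_0,c_1-c_0)$ preserves joint agreement on every $A$, is exactly the bookkeeping this adaptation needs. If you do need the CA-to-MCA upgrade, do not rely on pairwise interpolation, which only certifies agreement on $(1-2\gamma)n$ columns; instead build it from three facts: the cited conclusion (a single pair $(U_0,U_1)$ jointly agreeing with $C$ on at least $(1-\gamma)n$ columns), uniqueness below $\delta/2$ (any codeword agreeing with $C_s$ on a $(1-\gamma)n$-set must be $U_s$), and the observation that each column outside the joint agreement set agrees with $U_s$ for at most one $s$.
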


\begin{theorem}[MCA up to Johnson bound, adapted from~{\cite[Theorem~4.6]{BCHKS25}}]\label{thm:ca-johnson}
Let $\mathrm{RS}[\F, n, k]$ be a Reed--Solomon code.
Denote $\rho = k/n$, the slightly reduced rate of the code.
Let $\gamma \in (0, 1- \sqrt\rho)$, $\eta \coloneqq 1- \sqrt\rho - \gamma$, and $\mu = \max\left\{\left\lceil\tfrac{\sqrt\rho}{2 \eta}\right\rceil, 3\right\}$.
For any interleaved word $C \in \F^{2 \times n}$, it holds that
\begin{equation*}
  \Pr_{s \gets \F}\left[
    \exists A \subseteq [n], |A| \geq (1 - \gamma) \cdot n :
    \begin{array}{l}
      C_{s} \text{ agrees with } \mathrm{RS}[\F, n, k] \text{ on } A \\
      \land\; C \text{ does not agree with } \mathrm{RS}[\F, n, k]^{2} \text{ on } A
    \end{array}
  \right] \leq \frac{a}{|\F|} \enspace,
\end{equation*}
where $a = \frac{2(\mu + 1/2)^5 + 3(\mu + 1/2) \gamma \rho}{3 \rho^{3/2}} \cdot n + \frac{\mu + 1/2}{\sqrt{\rho}}$.
\end{theorem}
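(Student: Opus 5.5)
This statement is adapted from \cite[Theorem~4.6]{BCHKS25}, so the plan is a translation argument rather than a new proof. We would first fix the dictionary between the two settings. Our interleaved word $C \in \F^{2 \times n}$ with rows $c_0, c_1$ has folding $C_s$. Since $\eq(s,0)=1-s$ and $\eq(s,1)=s$, the folding is the affine line $C_s = (1-s)\,c_0 + s\,c_1 = c_0 + s\,(c_1 - c_0)$. Mutual correlated agreement in \cite{BCHKS25} is usually phrased for the line $u_0 + s\,u_1$ with $s$ uniform over $\F$. The map $(c_0, c_1) \mapsto (c_0, c_1 - c_0)$ is invertible and preserves the set of columns on which the pair agrees with $\mathrm{RS}[\F,n,k]^2$, because the code is linear. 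So the bad event in our statement is exactly the bad event in theirs for the pair $(u_0,u_1) = (c_0, c_1-c_0)$, with the same distribution of $s$.

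The second step is to match parameters. \cite{BCHKS25} parametrizes the Johnson-regime bound by the agreement or proximity radius $\gamma < 1-\sqrt\rho$ and a multiplicity parameter. We would set $\eta = 1-\sqrt\rho-\gamma$ and choose $\mu = \max\{\lceil \sqrt\rho/(2\eta)\rceil, 3\}$. This is the Guruswami--Sudan multiplicity needed so that decoding succeeds at radius $\gamma$. The requirement $\mu\ge 3$ mirrors the hypothesis in their theorem.

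Substituting these into their bound on the number of bad $s$ should give the count $a$. The bound has the form $\frac{2(\mu+\frac12)^5 + 3(\mu+\frac12)\gamma\rho}{3\rho^{3/2}} n + \frac{\mu+\frac12}{\sqrt\rho}$. Dividing by $|\F|$ gives the probability. The phrase ``slightly reduced rate'' indicates that $\rho$ should be taken as $k/n$ (or $(k+1)/n$, following their convention for degree $<k$). We would check which convention makes the stated formula literally correct, and adopt it.

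The main obstacle we expect is this bookkeeping, namely making sure the correspondence is exact. The key points are:
\begin{itemize}
  \item their agreement notion is per-column over a common set $A$, matching our ``agrees with $\mathrm{RS}^2$ on $A$'';
  \item their ``$\ge (1-\gamma)n$'' threshold matches ours;
  \item their degree and rate conventions are consistent with ours.
\end{itemize}
If their statement is for the line $u_0 + s u_1$ only, the reparametrization in the first step handles it. If it is stated for general low-degree curves, we specialize to degree one.
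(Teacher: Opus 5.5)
Your proposal is correct and matches the paper's approach. The paper gives no separate argument; it imports the bound directly from \cite[Theorem~4.6]{BCHKS25}. Your dictionary is exactly the bookkeeping that ``adapted'' hides: you write the fold as $C_s = c_0 + s(c_1 - c_0)$, and you note that, by linearity of the code, the invertible change of rows $(c_0,c_1)\mapsto(c_0,c_1-c_0)$ preserves exactly which column sets $A$ the pair agrees with $\mathrm{RS}[\F,n,k]^2$ on.
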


\begin{lemma}[MCA commutes with list decoding] \label{lemma:mca-commutes}
Let $\mathrm{RS}[\F, n, k]$ be a Reed--Solomon code.
Let $\gamma \in (0, 1)$, and let $\epsilon$ be a corresponding MCA error from \cref{thm:ca-udr} or \cref{thm:ca-johnson}.
For any $\ell \in \N$ and interleaved word $C \in \F^{2^{\ell} \times n}$, it holds that
\begin{equation*}
\Pr_{s \gets \F}\left[
  \Lambda_{\gamma}(C_{s}) \neq \{U_{s} : U \in \Lambda_{\gamma}(C)\}
\right] \leq 2^{\ell - 1} \cdot \epsilon \enspace.
\end{equation*}
\end{lemma}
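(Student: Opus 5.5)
The plan is to prove the two inclusions between $\Lambda_{\gamma}(C_{s})$ and $\{U_{s} : U \in \Lambda_{\gamma}(C)\}$ separately. One inclusion holds for every $s$. The other fails only on a ``bad'' event, which we reduce to a union of $2^{\ell-1}$ instances of the two-row event bounded in \cref{thm:ca-udr} or \cref{thm:ca-johnson}. Throughout, folding by $s$ acts on the first row coordinate. We write the rows of $C$ as pairs $(C[(0,\mathbf{i}),\cdot], C[(1,\mathbf{i}),\cdot])$ for $\mathbf{i} \in \bitset^{\ell-1}$, so that $C_{s}[\mathbf{i},\cdot] = (1-s)\,C[(0,\mathbf{i}),\cdot] + s\,C[(1,\mathbf{i}),\cdot]$.

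\emph{Easy inclusion.} Let $U \in \Lambda_{\gamma}(C)$, agreeing with $C$ on a set $A$ with $|A| \geq (1-\gamma)n$. Folding is a column-wise linear map, so $U_{s}$ agrees with $C_{s}$ on $A$. Moreover, $U_{s}$ is an interleaved codeword, since each of its rows is a linear combination of codewords. Hence $U_{s} \in \Lambda_{\gamma}(C_{s})$ for every $s$.

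\emph{Reverse inclusion.} Suppose $V \in \Lambda_{\gamma}(C_{s})$, agreeing with $C_{s}$ on some $A$ with $|A| \geq (1-\gamma)n$. We claim that if $C$ agrees with $\mathrm{RS}[\F,n,k]^{2^{\ell}}$ on this same $A$, then $V = U_{s}$ for some $U \in \Lambda_{\gamma}(C)$. Indeed, let $U$ agree with $C$ on $A$. Then $U \in \Lambda_{\gamma}(C)$, and $U_{s}$ agrees with $C_{s}$, hence with $V$, on $A$. Two Reed--Solomon codewords that agree on more than $k-1$ points are equal. Here $|A| \geq (1-\gamma)n \geq k$, because $\gamma < 1-\rho$ in both regimes (in the Johnson regime $\gamma < 1-\sqrt{\rho} \leq 1-\rho$). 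Applying this row by row gives $V = U_{s}$. Consequently, the bad event is contained in
\[
  E := \bigl\{ s : \exists A,\ |A| \geq (1-\gamma)n,\ C_{s} \text{ agrees with } \mathrm{RS}^{2^{\ell-1}} \text{ on } A \ \land\ C \text{ does not agree with } \mathrm{RS}^{2^{\ell}} \text{ on } A \bigr\}.
\]

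\emph{Reduction to two rows and union bound.} An interleaved word agrees with the interleaved code on $A$ if and only if each of its rows agrees with $\mathrm{RS}[\F,n,k]$ on $A$. Fix $s \in E$ and a witnessing set $A$. Since $C$ fails to agree on $A$, some pair $\mathbf{i}$ has a two-row word $C^{(\mathbf{i})} \in \F^{2\times n}$ that does not agree with $\mathrm{RS}^{2}$ on $A$. Yet its fold $C^{(\mathbf{i})}_{s} = C_{s}[\mathbf{i},\cdot]$ does agree with $\mathrm{RS}$ on $A$. So $s$ lies in the event of \cref{thm:ca-udr} (respectively \cref{thm:ca-johnson}) for the word $C^{(\mathbf{i})}$, with the same $A$. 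Each such event has probability at most $\epsilon$. A union bound over the $2^{\ell-1}$ pairs then gives $\Pr[E] \leq 2^{\ell-1}\epsilon$.

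The main point requiring care is the reverse inclusion. The agreement set $A$ of $V$ must be reused as the agreement set for $C$, which is exactly why the MCA formulation with a common set $A$ is needed rather than plain correlated agreement. The uniqueness step there needs $(1-\gamma)n \geq k$, which holds in both regimes. The rest is bookkeeping.
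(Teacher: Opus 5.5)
Your proof is correct and takes essentially the same route as the paper, which just cites \cite[Lemma 4.13]{ArnonCFY25} together with a union bound over the $2^{\ell-1}$ rows of $C_s$. You have unpacked that cited lemma into its two-inclusion argument, including the uniqueness step using $(1-\gamma)n \geq k$, and you correctly take the union bound over the two-row MCA events with a common agreement set $A$, not over per-pair lists.
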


\begin{proof}
  Follows from~{\cite[Lemma 4.13]{ArnonCFY25}} and a union bound over the $2^{\ell-1}$ rows of $C_{s}$.
\end{proof}


\paragraph{Round-by-round soundness errors.}
Let $C$ be the interleaved word sent by the prover in the commitment phase.
Let $L_{\max}$ be the upper bound on the list size $|\Lambda_{\gamma}(C)|$ from~\cref{eq:johnson-bound} (or $L_{\max} = 1$ for the unique decoding regime), and let $\epsilon$ be the MCA error probability from \cref{thm:ca-johnson} (or~\cref{thm:ca-udr} for the unique decoding regime).

At the start of the opening phase, suppose that the prover is bound to a polynomial, i.e., there is a unique codeword in $\Lambda_{\gamma}(C)$ (the list of nearby interleaved codewords) that decodes to a polynomial agreeing with the out-of-domain evaluation claim.
Moreover, suppose this polynomial disagrees with the input evaluation claim.
It follows that every codeword in $\Lambda_{\gamma}(C)$ decodes to a polynomial disagreeing with at least one of the two evaluation claims.
We show that the verifier rejects with high probability:
\begin{itemize}
  \item \textbf{Batching claims.} Suppose that every codeword in $\Lambda_{\gamma}(C)$ disagrees with an evaluation claim.
    By soundness of batching and a union bound over the list, we find that every codeword in $\Lambda_{\gamma}(C)$ disagrees with the batched evaluation claim, except with probability $L_{\max} \cdot \frac{1}{|\F|}$.
  \item \textbf{Sumcheck.} Before the $i$-th round of the sumcheck protocol, suppose that every codeword in $\Lambda_{\gamma}(C_{\mathbf{s}_{i-1}})$ does not satisfy the $(i - 1)$-th round sumcheck claim (when $i = 1$, this is the batched evaluation claim).
    In the $i$-th round, the verifier samples $s_{i} \gets \F$.
    By soundness of sumcheck and a union bound over the list, with all but probability $L_{\max} \cdot \frac{2}{|\F|}$ it holds that the the $s_{i}$-folding of any codeword in $\Lambda_{\gamma}(C_{\mathbf{s}_{i-1}})$ disagrees with the $i$-th round sumcheck claim.
    By \cref{lemma:mca-commutes}, with all but probability $2^{\ell - i} \cdot \epsilon$ the set of $s_{i}$-foldings of codewords in $\Lambda_{\gamma}(C_{\mathbf{s}_{i-1}})$ is exactly $\Lambda_{\gamma}(C_{\mathbf{s}_{i}})$.
    Taking a union bound over both errors, we find that every codeword in $\Lambda_{\gamma}(C_{\mathbf{s}_{i}})$ disagrees with the $i$-th round sumcheck claim, except with probability $L_{\max} \cdot \frac{2}{|\F|} + 2^{\ell - i} \cdot \epsilon$.
  \item \textbf{Commit.} The prover uses a smaller instantiation of the PCS to commit to some function $g \colon \bitset^{m - \ell} \to \F$; the probability that this step fails is bounded by the binding error of the smaller PCS's commitment phase.
  \item \textbf{Consistency check.} Suppose that $g$ either does not satisfy the folded claim (i.e., the residual sumcheck claim), or its encoding $\mathrm{Enc}(g)$ is $\gamma$-far from $C_{\mathbf{s}}$.
    The former case will be handled by the next step.
    In the latter case, the verifier detects an inconsistency, i.e., $\mathrm{Enc}(g)[j] \neq C_{\mathbf{s}}[j]$ for some $j \in J$, with all but probability $(1 - \gamma)^{t}$.
    \item \textbf{Recurse.} Suppose that $g$ does not agree with the folded claim or one of the consistency check claims.
    The remaining round-by-round soundness errors follow from those of the smaller PCS's opening phase.
\end{itemize}

\begin{remark}\label{rem:query_count}
  The number of queries made by the verifier is a direct function of the proximity radius $\gamma$; in particular, one must set $t \coloneqq \frac{\lambda}{-\log(1 - \gamma)}$ in order to get $\lambda$ bits of soundness in the consistency check.
  Thus, it is beneficial to choose codes with lower rate (and hence higher $\gamma$) in the later rounds of recursion, where the prover overhead is minimal.
\end{remark}

\begin{remark}[No OOD sampling in the initial commitment] \label{remark:no-ood}
  OOD sampling is usually preferable but requires the prover to perform an additional MLE evaluation, which is actually somewhat costly for the prover.
  We observe that the initial commitment's OOD sampling can be \emph{dropped}, at the cost of the outer protocol (which uses the commitment) increasing its soundness error by a factor of the list size $L$ (security follows from a union bound over the nearby codewords).
\end{remark}

\section{Proof of the Cauchy-shift identity}
\label{app:cauchy-shift}

We prove the Cauchy-shift identity used in the Unified Lookup optimization (\cref{sec:impl:uskip}). The proof rests on two standard facts about vanishing polynomials of $\F_2$-linear subspaces:

\begin{claim}[{Linearized vanishing polynomial; see e.g.\ \cite[Ch. 3]{LN97}}]
\label{cl:linearized}
Let $S$ be an $\F_2$-linear subspace of $\F_{2^\tau}$ and set $Z_S(x) := \prod_{s \in S}(x + s)$. Then:
\begin{enumerate}
\item $Z_S$ is \emph{additive}: $Z_S(x + y) = Z_S(x) + Z_S(y)$ for all $x, y \in \F_{2^\tau}$.
\item The formal derivative $Z_S'$ is constant on $S$: $Z_S'(s) = D$ for every $s \in S$, where $D := \prod_{s' \in S \setminus \{0\}} s'$.
\end{enumerate}
\end{claim}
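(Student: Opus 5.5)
We prove the two parts of \cref{cl:linearized} separately. Both rest on one fact: $Z_S$ is a \emph{linearized} polynomial, i.e.\ an $\F_{2^\tau}$-linear combination of the monomials $x^{2^i}$.

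For part 1, we induct on $\dim_{\F_2} S$. If $S=\{0\}$ then $Z_S(x)=x$, which is additive. For the inductive step, write $S = S' \oplus \{0,\beta\}$ with $\beta \notin S'$. Then
\[
  Z_S(x) \;=\; \prod_{s\in S'}(x+s)\prod_{s\in S'}(x+\beta+s) \;=\; Z_{S'}(x)\,Z_{S'}(x+\beta).
\]
By the inductive hypothesis $Z_{S'}(x+\beta)=Z_{S'}(x)+Z_{S'}(\beta)$, so
\[
  Z_S(x) \;=\; Z_{S'}(x)^2 + Z_{S'}(\beta)\,Z_{S'}(x).
\]
In characteristic $2$, squaring is additive (Frobenius), and $Z_{S'}$ is additive by hypothesis. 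Hence $Z_{S'}^2$ is additive, and $Z_{S'}(\beta)Z_{S'}$ is a scalar multiple of an additive map. Their sum $Z_S$ is therefore additive. The same recursion shows $Z_S$ is an $\F_{2^\tau}$-combination of $x^{2^i}$ terms, which we use next.

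For part 2, differentiate the linearized form. The formal derivative of $x^{2^i}$ is $2^i x^{2^i-1}$, which vanishes in characteristic $2$ for every $i\ge 1$. So $Z_S'$ is the constant equal to the coefficient of $x$ in $Z_S$, and in particular it takes the same value at every $s\in S$.

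To identify that constant, compute $Z_S'(0)$ directly from the product. Since $0\in S$, we have $Z_S(x)=x\cdot\prod_{s'\ne0}(x+s')$. The product rule then gives
\[
  Z_S'(0)\;=\;\prod_{s'\in S\setminus\{0\}} s' \;=\; D.
\]
The only subtle point in the whole argument is the inductive step's use of Frobenius additivity, and that step is standard. No step should be a real obstacle; the claim is the textbook statement from \cite{LN97}.
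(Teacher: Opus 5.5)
Your proof is correct. The paper does not prove this claim at all: it cites the textbook fact \cite[Ch.~3]{LN97} that the vanishing polynomial of an $\F_2$-subspace is a linearized polynomial. Your induction via $Z_S(x) = Z_{S'}(x)\,Z_{S'}(x+\beta) = Z_{S'}(x)^2 + Z_{S'}(\beta)\,Z_{S'}(x)$ is a standard elementary, self-contained proof of that fact.

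One point to tighten is the inductive hypothesis. Make it the \emph{formal} statement that $Z_{S'} \in \F_{2^\tau}[x]$ is an $\F_{2^\tau}$-combination of the monomials $x^{2^i}$, rather than additivity as a function on $\F_{2^\tau}$. The formal version gives $Z_{S'}(x+\beta) = Z_{S'}(x) + Z_{S'}(\beta)$ as an identity of polynomials, since $(x+\beta)^{2^i} = x^{2^i} + \beta^{2^i}$. The recursion then holds in $\F_{2^\tau}[x]$, and both function-level additivity and the linearized form follow at once.

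Part 2 really does need the formal version, because formal derivatives are invisible to identities between functions on $\F_{2^\tau}$. For example, when $S = \F_{2^\tau}$, $Z_S = x^{2^\tau} + x$ vanishes as a function but has derivative $1$. Your phrase ``the same recursion shows'' covers this, but it should be the primary induction rather than an afterthought.

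In Part 2 you also prove more than is asked: $Z_S'$ is the constant polynomial $D$ everywhere. The statement needs only its values on $S$, and those follow more directly. Write $Z_S'(s) = \prod_{s' \in S\setminus\{s\}}(s+s')$ and reindex by the bijection $s' \mapsto s+s'$ from $S\setminus\{s\}$ onto $S\setminus\{0\}$. This uses only closure of $S$ under addition and no linearized structure. Your route has the advantage that one structural fact, the linearized form, yields both parts together.
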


\begin{lemma}[Translation-invariance of the LDE matrix]
\label{lem:cauchy-shift}
Let $\tau$ be a power of two, let $S \subseteq \F_{2^\tau}$ be an $\F_2$-linear subspace with $|S| \ge \tau$, let $\delta \in \F_{2^\tau} \setminus S$, and let $\Lambda := \delta + S$. Fix any $\F_2$-basis of $S$ and use it to index $S$ (resp.\ $\Lambda$) by $[|S|]$ via the natural $\F_2$-linear bijection (resp.\ its $\delta$-translate). Let $M := \mathrm{NTT}_\Lambda \circ \mathrm{iNTT}_S$. Then for every $i \in [|S|]$, $b \in \{0, 1, \ldots, |S|/\tau - 1\}$, and $j \in \{0, 1, \ldots, \tau - 1\}$,
\[
  M\bigl[i,\, \tau b + j\bigr] \;=\; M\bigl[\,i \oplus \tau b,\, j\,\bigr].
\]
\end{lemma}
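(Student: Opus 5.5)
Write the indexing as $s_i := \sum_t i_t e_t$ for the chosen basis $e_0,\dots,e_{k-1}$ of $S$ (so $s_{i\oplus i'} = s_i + s_{i'}$), and $\lambda_i := \delta + s_i \in \Lambda$. The plan is to obtain a closed form for $M[i,j]$ and then read off the shift identity from additivity.

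\emph{Step 1: closed form.} $M$ maps the values of a polynomial $p$ of degree $<|S|$ on $S$ to its values on $\Lambda$. By Lagrange interpolation,
\[
  M[i,j] = L_j(\lambda_i), \qquad L_j(x) = \frac{Z_S(x)}{(x + s_j)\, Z_S'(s_j)}.
\]
By \cref{cl:linearized}(2), the denominator factor $Z_S'(s_j)$ equals the constant $D$ for every $j$. Moreover $Z_S(\lambda_i) = Z_S(\delta) + Z_S(s_i) = Z_S(\delta)$ by \cref{cl:linearized}(1), since $Z_S$ vanishes on $S$. Hence
\[
  M[i,j] = \frac{Z_S(\delta)}{D}\cdot\frac{1}{\delta + s_i + s_j},
\]
which is a Cauchy matrix depending only on $s_i + s_j$ (characteristic $2$).

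\emph{Step 2: shift.} With the closed form, $M[i,j]$ depends only on $s_i + s_j = s_{i\oplus j}$, i.e.\ on $i \oplus j$. It remains to check the index arithmetic: $M[i,\tau b + j] = M[i\oplus \tau b, j]$ reduces to $i \oplus (\tau b + j) = (i\oplus \tau b)\oplus j$. This holds because $\tau$ is a power of two and $0 \le j < \tau$, so the integer $\tau b + j$ has disjoint binary support, giving $\tau b + j = (\tau b)\oplus j$. The hypothesis $|S|\ge\tau$ ensures $\tau b$ and $j$ are valid indices in $[|S|]$, with indices read as $\{0,\dots,|S|-1\}$.

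\emph{Main obstacle.} The only real subtlety is the indexing convention: the identity must be stated with respect to the same $\F_2$-linear bijection for both $S$ and $\Lambda$, so that XOR of indices corresponds to field addition. Everything else follows directly from the two facts in \cref{cl:linearized}. I would also double-check that $\delta\notin S$ guarantees $\delta + s_i + s_j \neq 0$, so the Cauchy entries are well-defined.
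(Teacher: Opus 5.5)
Your proposal is correct and follows the paper's proof essentially step for step. You derive the same Cauchy closed form $M[i,k] = Z_S(\delta)/\bigl((\lambda_i + s_k)\, D\bigr)$ from both parts of \cref{cl:linearized}, and you finish with the same no-carry observation $\tau b + j = (\tau b) \oplus j$ together with the $\F_2$-linearity of the indexing; your extra check that $\delta \notin S$ keeps the denominators nonzero is a welcome but minor addition.
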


\begin{proof}
Write $\{e_0, \ldots, e_{k_{\mathrm{s}}-1}\}$ for the chosen basis, and for $k \in [|S|]$ with binary expansion $k = \sum_r k_r 2^r$, let $s_k := \sum_{r:k_r=1} e_r \in S$ and $\lambda_i := \delta + s_i \in \Lambda$ denote the corresponding field elements. By Lagrange interpolation on $S$ (using $a - b = a + b$ in characteristic $2$),
\[
  M[i, k] \;=\; \frac{Z_S(\lambda_i)}{(\lambda_i + s_k) \cdot Z_S'(s_k)},
\]
where $Z_S(x) := \prod_{s \in S}(x + s)$ is the vanishing polynomial of $S$. By \cref{cl:linearized}(1), $Z_S(\lambda_i) = Z_S(\delta + s_i) = Z_S(\delta) + Z_S(s_i) = Z_S(\delta)$ for every $i$ (the second term vanishes as $s_i \in S$). By \cref{cl:linearized}(2), $Z_S'(s_k) = D$ for every $k$. Setting $\mu := Z_S(\delta)$, we obtain
\begin{equation}
\label{eq:M-closed-form}
  M[i, k] \;=\; \frac{\mu}{(\lambda_i + s_k) \cdot D}.
\end{equation}

Finally, $\tau$ being a power of $2$ with $\tau \le |S|$ gives $(\tau b) + j = (\tau b) \oplus j$ (no carry), and the $\F_2$-basis indexing gives $s_{(\tau b) \oplus j} = s_{\tau b} + s_j$ and $\lambda_{i \oplus (\tau b)} = \lambda_i + s_{\tau b}$. Substituting into~\eqref{eq:M-closed-form},
\begin{align*}
  M[i, (\tau b) + j]
  &= \frac{\mu}{(\lambda_i + s_{\tau b} + s_j) \cdot D} \\
  &= \frac{\mu}{(\lambda_{i \oplus (\tau b)} + s_j) \cdot D}
  \;=\; M[i \oplus (\tau b), j]. \qedhere
\end{align*}
\end{proof}

\end{document}